\let\oldcolor\color
\let\oldtextcolor\textcolor
  \renewcommand{\textcolor}[2]{#2}
  \renewcommand{\color}[1]{\ignorespaces}
  \let\textcolor\oldtextcolor
  \let\color\oldcolor
  \let\textcolor\oldtextcolor
  \let\color\oldcolor
\newenvironment{customlegend}[1][]{%
    \begingroup
    \csname pgfplots@init@cleared@structures\endcsname
    \pgfplotsset{#1}%
}{%
    \csname pgfplots@createlegend\endcsname
    \endgroup
}%
\def\addlegendimage{\csname pgfplots@addlegendimage\endcsname}
\newcommand{\stitle}[1]{\vspace{1ex} \noindent{\bf #1}}
\newtheorem{theorem}{Theorem}
\newtheorem{definition}{Definition}
\newtheorem{lemma}{Lemma}
\newtheorem{example}{Example}
\let\oldnl\nl
\newcommand{\nonl}{\renewcommand{\nl}{\let\nl\oldnl}}
\newcommand{\algo}{APP\xspace}
\newcommand{\algot}{ASRP\xspace}
\definecolor{MC-color}{HTML}{ff7f00}
\definecolor{PI-color}{HTML}{ec028d}
\definecolor{FP-color}{HTML}{007f7f}
\definecolor{first-color}{HTML}{00adee}
\definecolor{second-color}{HTML}{ff0000}
\definecolor{APPS-color}{HTML}{aa4e32}
\definecolor{APPE-color}{HTML}{b44df2}
\definecolor{AHPP-color}{HTML}{00b9f2}
\definecolor{BHPP-color}{HTML}{f8c6a9}
\definecolor{HPP-color}{HTML}{b4a6ca}
\definecolor{PPR-color}{HTML}{924361}
\definecolor{Simrank-color}{HTML}{66ce63}
\definecolor{Pearson-color}{HTML}{ae5e52}
\definecolor{Jaccard-color}{HTML}{ff7f0e}
\definecolor{AnchorGNN-color}{HTML}{e7991b}
\definecolor{GEBEp-color}{HTML}{ea6a4f}
\definecolor{BiANE-color}{HTML}{3a8b9e}
\definecolor{RW_Uniform-color}{HTML}{d93189}
\definecolor{IDBR-color}{HTML}{ffee65}
\definecolor{BiNE-color}{HTML}{d4eded}
\definecolor{AHPP-line-color}{HTML}{ff0000}
\definecolor{BHPP-line-color}{HTML}{ff7f00}
\definecolor{HPP-line-color}{HTML}{ec028d}
\definecolor{PPR-line-color}{HTML}{007f7f}
\definecolor{Simrank-line-color}{HTML}{00adee}
\definecolor{Pearson-line-color}{HTML}{aa4e32}
\definecolor{Jaccard-line-color}{HTML}{b44df2}
\definecolor{AnchorGNN-line-color}{HTML}{16048a}
\definecolor{GEBEp-line-color}{HTML}{0162a7}
\definecolor{BiANE-line-color}{HTML}{45c4b0}
\definecolor{RW_Uniform-line-color}{HTML}{fb9575}
\definecolor{IDBR-line-color}{HTML}{0000ff}
\definecolor{BiNE-line-color}{HTML}{bfbf00}
\begin{document}

\title{Scalable Similarity Search over Large Attributed Bipartite Graphs} 
	\author{Xi Ou, Longlong Lin, Zeli Wang, Pingpeng Yuan, Rong-Hua Li

	\IEEEcompsocitemizethanks{

   \IEEEcompsocthanksitem Xi Ou and Longlong Lin are with the College of Computer and Information Science, Southwest University. Email: xiou.cs@outlook.com, longlonglin@swu.edu.cn.  Corresponding Author:  Longlong Lin
        \IEEEcompsocthanksitem  Zeli Wang is with Chongqing University of Posts and Telecommunications.
		Email: zlwang@cqupt.edu.cn.
        \IEEEcompsocthanksitem  Pingpeng Yuan is with the School of Computer Science and Technology, Huazhong University of Science and Technology, China.
		E-mail: ppyuan@hust.edu.cn.
        				\IEEEcompsocthanksitem  Rong-Hua Li is with 
		Beijing Institute of Technology.  Email: lironghuabit@126.com.}
    }

\maketitle

\begin{abstract}
Bipartite graphs are widely used to model relationships between entities of different types, where nodes are divided into two disjoint sets. Similarity search, a fundamental operation that retrieves nodes similar to a given query node, plays a crucial role in various real-world applications, including machine learning and graph clustering. However, existing state-of-the-art methods often struggle to accurately capture the unique structural properties of bipartite graphs or fail to incorporate the informative node attributes, leading to suboptimal performance. Besides, their high computational complexity limits scalability, making them impractical for large graphs with millions of nodes and tens of thousands of attributes.  To overcome these challenges, we first introduce Attribute-augmented Hidden Personalized PageRank (AHPP), a novel random walk model designed to blend seamlessly both the higher-order bipartite structure proximity and attribute similarity. We then formulate the similarity search over attributed bipartite graphs as an approximate AHPP problem and propose two efficient push-style local algorithms with provable approximation guarantees. \textcolor{blue}{Finally, extensive experiments on real-world and synthetic datasets validate the effectiveness of AHPP and the efficiency of our proposed algorithms when compared with fifteen competitors.}
\end{abstract}

\begin{IEEEkeywords}
Similarity Search; Attributed Bipartite Graphs
\end{IEEEkeywords}

\section{Introduction} \label{sec:introduction}
\textcolor{blue}{Similarity search on graphs is a fundamental technique that aims to identify nodes similar to a given query node, which has witnessed numerous applications, including graph clustering \cite{DBLP:conf/kdd/Lin0WZL24, DBLP:conf/aaai/LinLJ23, DBLP:journals/tnn/LiRGYY25, DBLP:journals/tmm/LiGYYR25}, community search \cite{li2023persistent, li2024maximal, DBLP:conf/IEEEwisa/LiXD24, DBLP:journals/pvldb/LinYLZQJJ24, 11235568}, and graph neural networks \cite{DBLP:journals/pvldb/MengLLLW24, DBLP:conf/mir/YuLLWOJ24,JAS,DBLP:conf/mir/LiuLYOZY025}.}  Despite the significant success, most existing approaches are designed specifically for unipartite graphs, where all nodes belong to a single type \cite{DBLP:journals/corr/abs-2412-10789,wang2017fora, wang2021exactsim, liao2022efficient, liao2023efficient}. However, in many real-world applications, nodes naturally fall into two distinct types, with interactions occurring only between different types. For instance, in an actor-movie network, nodes represent actors or movies, and edges indicate an actor's participation in a movie. In a customer-product network, nodes represent customers and products, with edges signifying purchases. These networks are commonly called bipartite graphs \cite{huang2020biane, yang2022scalable, wu2023billion, yang2024effective}. Consequently, unipartite graph similarity search methods ignore the unique properties of bipartite graphs with two disjoint vertex sets, leading to suboptimal results \cite{yang2022efficient, liu2024bird}.



Recently, several studies have explored similarity search specifically designed for bipartite graphs (Section \ref{sec:relate}). A notable example is the Hidden Personalized PageRank (HPP), a variant of Personalized PageRank (PPR) \cite{wang2017fora, liao2022efficient, liao2023efficient} designed for bipartite graphs. HPP and its variants have demonstrated exceptional performance across various bipartite graph applications \cite{yang2022efficient, liu2024bird}. \textcolor{blue}{However, most existing similarity models primarily focus on the structural properties of bipartite networks, while largely overlooking informative node attributes. As noted in prior studies \cite{shi2019robust, ahmad2020conceptual}, real-world graphs are often constructed from noisy and incomplete data, which may contain spurious edges or suffer from missing links, thereby limiting the effectiveness of structure-only methods. In contrast, nodes in practical applications are usually associated with rich attribute information that can serve as an important complement to the imperfect topological structure. For instance, in an e-commerce recommendation scenario modeled as a user-item bipartite graph, explicit user-item interactions (e.g., purchases or ratings) are typically sparse and incomplete, especially for new or inactive users. Nevertheless, users’ demographic profiles and behavioral attributes, as well as items’ textual descriptions and categorical features, provide valuable semantic cues for identifying similar users or items even in the absence of sufficient structural connections. Similar phenomena also arise in citation networks, where papers with few citations may still be accurately related through textual content or topical attributes. These observations indicate that effectively incorporating node attributes is crucial for robust similarity measurement in real-world bipartite graphs. Thus, developing models that can perform similarity search on attributed bipartite graphs is both necessary and timely.}




Given an attributed bipartite graph \( G \) with two disjoint node partitions, \( U \) and \( V \), and a query node \( u \in U \), similarity search over \( G \) aims to identify nodes in \( U \) that are similar to \( u \) based on a well-defined similarity measure \cite{sun2005relevance} (Definition \ref{def:ss}).  One simple solution is to seek the idea of node embedding. That is, we can first encode all nodes into low-dimensional embedding vectors and then calculate the distance of the embedding vectors of the two nodes as their similarity. \textcolor{blue}{However, such solutions (e.g., \cite{yang2022scalable, huang2020biane, wu2023billion, DBLP:conf/sigir/LiWLLJD24, DBLP:journals/tkde/GaoHCLZZ22}) have prohibitive training costs, resulting in poor scalability. Meanwhile, they calculate the similarity through the intermediate embedding vectors, causing the high-order information to be ignored and greatly compressing qualities (Section \ref{sec:experiments}).} In essence, two crucial observations underlie an effective similarity search in attributed bipartite graphs. Firstly, an effective similarity measure should capture both direct and indirect interactions (i.e., higher-order relationships) among nodes, by exploiting the cross-partition connections defined by the bipartite structure. Specifically, nodes in \( U \) that share multi-hop connections via nodes in \( V \) should be considered more similar \cite{DBLP:conf/iclr/TadicBN25}. For instance, in an e-commerce scenario, two users in \( U \) are not merely deemed more similar due to sharing common interacted products (i.e., 1-hop neighbors in \( V \)), but rather when their shopping behaviors exhibit intersecting multi-hop connection patterns through the product space \( V \). Secondly, the similarity measure should integrate attribute-based associations between nodes. More precisely, each node \( u\in U \) is associated with a set of attributes that offer complementary information for identifying similar nodes. The underlying intuition is that similar nodes in \( U \) tend to form multi-hop connections through attributes \cite{li2023attributed}. For example, in a retail scenario, two young female users are deemed more similar if their multi-hop paths via attributes such as "age-youth" and "gender-female" intersect across multiple hops in the bipartite graph. As a consequence, incorporating both structural interactions and attribute-based associations is essential for improving similarity search in attributed bipartite graphs.

Based on these observations, we introduce a novel similarity measure called \textit{Attribute-augmented Hidden Personalized PageRank (AHPP)}. AHPP, as detailed in Section \ref{subsec:ahpp}, extends Hidden Personalized PageRank (HPP) by seamlessly integrating both higher-order attribute similarity and structural similarity into a unified framework. Specifically, HPP alone is insufficient for similarity search in attributed bipartite graphs, as it fails to leverage attribute information effectively. To address this limitation, AHPP enhances similarity estimation by combining the newly proposed attribute transition matrix (Eq. \eqref{eq:compute_attributed_transition}) with the hidden transition matrix (Eq. \eqref{eq:compute_hidden_transition}) introduced in HPP. Additionally, AHPP can be interpreted as the convergence probability of the \( \alpha \)-attribute-augmented hidden random walk (Lemma \ref{lem:compute_A-RWR}), which further enhances its interpretability for practical applications. Existing methods for solving AHPP (Section \ref{sec:existing}) typically require materializing an intermediate weighted graph with \( O(|U|^2) \) time complexity, or rely on expensive Monte Carlo sampling or local graph diffusion. These approaches become impractical for large-scale graphs with millions of nodes and tens of thousands of attributes. Consequently, existing algorithms cannot be directly applied to efficiently solve the proposed AHPP problem, posing a significant efficiency challenge. In short, the main contributions are summarized as follows:


\stitle{Novel Problem.} We introduce a novel model AHPP and frame the similarity search over attributed bipartite graphs as the \( \epsilon \)-approximate single-source AHPP query (Section \ref{subsec:ahpp}). Besides, none of the existing PPR-like techniques can be directly applied to efficiently solve our problem (Section \ref{sec:existing}).

\stitle{Efficient Algorithmic Design.} We propose two efficient algorithms, namely APP (Alternating Propagation Push, Section \ref{subsec:APP}) and ASRP (Adaptive Synchronous Residue Push, Section \ref{subsec:ASRP}), along with their theoretical analysis, to efficiently solve the \( \epsilon \)-approximate single-source AHPP query problem. In particular, the APP algorithm builds on two key observations from the well-known Forward Push technique \cite{wang2017fora}: (i) the equivalence of residue propagation and (ii) redundancy in residue propagation. The ASRP algorithm extends APP by incorporating a synchronous push strategy and a more effective termination threshold, achieving a near-linear time complexity.

\stitle{Extensive Empirical Validation.} We conduct extensive experiments on real-world and synthetic graphs to validate the effectiveness and efficiency of our proposed solutions against thirteen competitors. The results demonstrate that the AHPP model consistently outperforms existing similarity search models in cluster consistency validation, top-k precision, and link prediction. \textcolor{blue}{Notably, our AHPP achieves F1-score improvements ranging from 8\% to 12\% compared with state-of-the-art methods in the clustering consistency validation task.} Besides, the proposed AHPP and ASRP are 1$\sim$2 orders of magnitude faster than existing methods.

\stitle{Reproducibility.}  The source code, datasets, and parameter settings are available at https://github.com/longlonglin/AHPP.


%
\vspace{-0.2cm}
\section{PRELIMINARIES} \label{sec:pro}

\subsection{Basic Notations}
Let $G=(U, V,  E, \mathcal{A}, E_\mathcal{A})$ be an attributed bipartite graph, where $U$ and $V$ are two disjoint node sets (i.e., $U\cap V =\emptyset$). Each edge $e = (u, v, w(u, v)) \in E$ represents a connection between a node in $U$ and a node in $V$, associated with a positive weight $w(u, v)$. For each node $u$, we use $N(u)=\{v|(u,v)\in E\}$ to denote the neighbors of $u$ and $d(u)=\sum_{v \in N(u)} w(u, v)$ to represent the (weighted) degree of $u$. $\mathcal{A}$ is the set of attributes associated with vertices for describing node profiles. For any node $u$ and attribute $a \in \mathcal{A}$, we let $\mathcal{A}(u) \subseteq \mathcal{A}$ be the set of attributes associated with $u$ and $\mathcal{A}^{-1}(a) = \{u|a \in \mathcal{A}(u)\}$ be the set of nodes that possess attribute $a$. $E_\mathcal{A}$ stands for the set of node-attribute associations, where each element is a tuple $(u, a, w(u, a))$ that signifies node $u$ is associated with attribute $a \in \mathcal{A}$ with a weight $w(u, a)$ (i.e., the attribute value). Note that, by default, we consider $U$ as the target node set and evaluate the similarity of the nodes within it. A high-level definition of the similarity search over the attributed bipartite graph problem is stated as follows.
\begin{definition} [Similarity Search over Attributed Bipartite Graphs \cite{sun2005relevance}] \label{def:ss}
	Given an attributed bipartite graph $G$ and the target node set  $U$, the similarity search of any node $u \in U$ aims to identify nodes within $U$ that are similar to $u$ in terms of both structure proximity and attribute similarity.
\end{definition}
In this paper, we denote matrices and vectors using bold uppercase and lowercase letters, e.g., $\mathbf{M}$ and $\mathbf{x}$ respectively. Accordingly, $\mathbf{M}[i]$ (resp. $\mathbf{M}[:, j]$) signifies the $i$-th row (resp. the $j$-th column) vector of $\mathbf{M}$, and $\mathbf{M}[i, j]$ signifies the element at the $i$-th row and the $j$-th column of $\mathbf{M}$. We denote the hidden transition matrix \cite{liu2024bird} for the node set $U$ as $\mathbf{P_S}$, where each entry $(u_i, u_j)$ is calculated by
\begin{equation}\label{eq:compute_hidden_transition}
    \mathbf{P_S}[u_i, u_j] = \sum_{v \in N(u_i) \cap N(u_j)}{\frac{w(u_i, v)}{d(u_i)}\cdot \frac{w(v, u_j)}{d(v)}}. 
\end{equation}
Note that the number of non-zero entries in $\mathbf{P_S}$ can reach up to $O(|U|^2)$ in the worst case where one node $v \in V$ has $O(|U|)$ neighbors in $U$, resulting in poor scalability.

\begin{figure}[t!]
	\centering
	\includegraphics [width=0.45\textwidth] {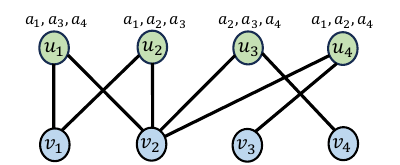}
	\caption{The example of an attributed bipartite graph.}
\label{fig:preliminary_example} \vspace*{-0.3cm}
\end{figure}

\begin{example}
Figure \ref{fig:preliminary_example} illustrates an example of an attributed bipartite graph \( G \). The top part of the figure (in light green) represents the set \( U \), while the bottom part (in light blue) represents the set \( V \). Specifically, the graph contains four nodes \( u_1 \) to \( u_4 \) with attributes \( a_1 \) to \( a_4 \) in \( U \), and four nodes \( v_1 \) to \( v_4 \) in \( V \). For instance, node \( u_1 \) has two neighbors, \( v_1 \) and \( v_2 \), and is associated with three attributes: \( a_1 \), \( a_3 \), and \( a_4 \). Assume that all edge weights in \( G \) are set to 1. According to Eq. \eqref{eq:compute_hidden_transition}, the transition probability \( \mathbf{P_S}[u_1, u_2] \) is calculated as  $\mathbf{P_S}[u_1, u_2] = \frac{w(u_1, v_1)}{d(u_1)} \cdot \frac{w(v_1, u_2)}{d(v_1)} + \frac{w(u_1, v_2)}{d(u_1)} \cdot \frac{w(v_2, u_2)}{d(v_2)} = \frac{1}{2} \cdot \frac{1}{2} + \frac{1}{2} \cdot \frac{1}{4} = \frac{3}{8}$.
Analogously, $\mathbf{P_S}[u_1, u_3] = \frac{w(u_1, v_2)}{d(u_1)} \cdot \frac{w(v_2, u_3)}{d(v_2)} = \frac{1}{2} \cdot \frac{1}{4} = \frac{1}{8}$. It can be observed that node \( u_1 \) exhibits a higher transition probability to node \( u_2 \) than to node \( u_3 \). This observation aligns with the intuition that \( u_1 \) is more strongly connected to \( u_2 \) via multiple intermediate nodes in \( V \), specifically \( v_1 \) and \( v_2 \).
\end{example}

\subsection{Hidden Personalized PageRank} \label{subsec:hpp}
Similarity search is a key task in graph analysis that measures the relevance between two nodes \cite{yang2022efficient}. One of the most prominent models for this purpose is the \textit{Personalized PageRank} (PPR) \cite{liao2022efficient, liao2023efficient}, which is well-regarded for its effectiveness and solid theoretical foundation. As a variant of PPR, the \textit{Hidden Personalized PageRank} (HPP) \cite{yang2022efficient, liu2024bird} is designed for bipartite graphs. Specifically,  given two nodes $u_i, u_j \in U$ and a restart probability $\alpha \in (0, 1)$, the HPP value of $u_j$ with respect to (w.r.t) $u_i$ is defined as the probability that an $\alpha$-hidden random walk starting from $u_i$ stops at $u_j$. The  $\alpha$-hidden random walk of $u_i$ is denoted as follows: (1) it starts from the node $u_i$; (2) at each step it stops at the current node with probability $\alpha$, or it continues to walk according to Eq. \eqref{eq:compute_hidden_transition} with probability $(1 - \alpha)$. Therefore, HPP can be understood as the PPR on the weighted graph $\bar{G}$, which is constructed from $G$, where the node set of $\bar{G}$ is $U$ and the weights of the edges are defined as $\mathbf{P_S}[u_i, u_j]$ for all $u_i, u_j \in U$. However, HPP only considers graph structural information and ignores the informative node attributes, leading to sub-optimal results, as stated in our empirical results (Section \ref{sec:experiments}).

\subsection{From HPP to Attribute-augmented HPP}\label{subsec:ahpp}
In this subsection, we propose a novel similarity measure, \textit{Attribute-augmented  Hidden Personalized PageRank} (AHPP), to model the similarity between two nodes over an attributed bipartite graph. However, designing an effective state transition matrix for an attributed bipartite graph remains an important challenge. Intuitively, nodes in \( U \) that share multi-hop connections via nodes in \( V \) should exhibit high similarity \cite{DBLP:conf/iclr/TadicBN25}. Besides, each node $u \in U$ is associated with a set of attributes that can provide supplementary information to help identify nodes in $U$ that are similar to $u$ \cite{li2023attributed}. Based on these intuitions, we redefine the hidden transition matrix $\mathbf{P_S}$ (i.e., Eq. \eqref{eq:compute_hidden_transition}) as the \textit{structure transition matrix}, which captures the structural information. Following \cite{}, we say that $u_i$ and $u_j$ are connected via attribute $r_k$ if $u_i$ and $u_j$ are associated with a common attribute $r_k$. To further capture the informative node attributes, we define the \textit{attribute transition matrix} $\mathbf{P_A} \in \mathbb{R}^{|U| \times |U|}$, where each element $(u_i, u_j)$ is computed as follows:
\begin{equation}\label{eq:compute_attributed_transition}\scriptsize
\mathbf{P_A}[u_i, u_j] = \sum_{a \in \mathcal{A}(u_i) \cap \mathcal{A}(u_j)} \frac{w(u_i, a)}{\sum_{a_i \in \mathcal{A}(u_i)} w(u_i, a_i)}\cdot \frac{w(u_j, a)}{\sum_{u_k \in \mathcal{A}^{-1}(a)} w(u_k, a)}.
\end{equation}
Intuitively, $\mathbf{P_A}[u_i, u_j]$ represents the transition probability from state $u_i$ to state $u_j$ through any common attribute in $\mathcal{A}(u_i) \cap \mathcal{A}(u_j)$. Specifically, $\frac{w(u_i, a)}{\sum_{a_i \in \mathcal{A}(u_i)} w(u_i, a_i)}$ represents the probability that node $u_i$ chooses attribute $a$, while $\frac{w(u_j, a)}{\sum_{u_k \in \mathcal{A}^{-1}(a)} w(u_k, a)}$ denotes the probability that node $u_j$ is chosen by attribute $a$. We formulate AHPP as follows.


\begin{definition}\label{def:A-RWR}
[Attribute-augmented  Hidden Personalized PageRank (AHPP)]
  \textcolor{blue}{  Given an attributed bipartite graph $G=(U, V,  E, \mathcal{A}, E_\mathcal{A})$, a restart probability $\alpha$, an attribute jumping probability $\beta$, a source node $u_i \in U$ and a target node $u_j \in U$,
the AHPP score of node $u_j$ w.r.t. $u_i$ is denoted by
\begin{equation}\label{eq:random_walk_style}
    \pi(u_i, u_j)=\sum_{\ell=0}^{\infty} \alpha (1 - \alpha)^{\ell} \cdot \mathbf{P}^{\ell}[u_i, u_j],
\end{equation}
    in which $\mathbf{P} = (1 - \beta) \cdot \mathbf{P_S} + \beta \cdot \mathbf{P_A}$.}
\end{definition}

\stitle{Remark.} The novel state transition matrix \( \mathbf{P} \) is a combination of the structure transition matrix \( \mathbf{P_S} \) and the attribute transition matrix \( \mathbf{P_A} \), weighted by the attribute jumping probability \( \beta \). This formulation captures a trade-off between the topological structure and node attributes, enabling us to measure node similarity with distinct characteristics by adjusting \( \beta \) (more details in Section \ref{sec:experiments}). Specifically, when \( \beta < 0.5 \), the similarity measure emphasizes the structural interaction among vertices. In contrast, when \( \beta > 0.5 \), the focus shifts toward attribute similarity among vertices. From Eq. \eqref{eq:random_walk_style}, we see that \( \pi(u_i, u_j) \) succinctly summarizes an infinite number of random walks following \( \mathbf{P} \), quantifying the multi-hop higher-order structural and attribute connections between \( u_i \) and \( u_j \). In other words, \( \pi(u_i, u_j) \) represents the cumulative direct and indirect influence of \( u_i \) on \( u_j \) in terms of both topological structure and node attributes. Furthermore, the following lemma demonstrates that \( \pi(u_i, u_j) \) has a probabilistic interpretation, thereby enhancing its strong interpretability for practical applications. 

\begin{lemma}\label{lem:compute_A-RWR}
    Given an attributed bipartite graph $G=(U, V,  E, \mathcal{A}, E_\mathcal{A})$, a restart probability $\alpha$, an attribute jumping probability $\beta$, a source node $u_i \in U$ and a target node $u_j \in U$.  
    The AHPP score $\pi(u_i, u_j)$ is the probability that the following $\alpha$-attribute-augmented hidden random walk starting from $u_i$ stops at $u_j$.  The $\alpha$-attribute-augmented hidden random walk of $u_i$ is denoted as follows: (1) it starts from the node $u_i$; (2) at each step, the walker either stops at the current node $u_k \in U$ with probability $\alpha$, or with  probability $(1 - \alpha)$, jumps to another node $u_l \in U$ according to the following rules:
    
\stitle{Rule 1: structure transition.} Moving to node $u_l$ based on the structure transition probability $\mathbf{P_S}[u_k, u_l]$ with probability $(1 - \beta)$.

\stitle{Rule 2: attribute transition.} Moving to node $u_l$ based on the attribute transition probability $\mathbf{P_A}[u_k, u_l]$ with probability $\beta$.
\end{lemma}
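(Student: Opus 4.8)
The plan is to unfold the $\alpha$-attribute-augmented hidden random walk according to the (random) number of steps it takes before halting, and to identify the per-step transition law with a single row of $\mathbf{P}$. Before doing so, I would first check that $\mathbf{P}$ is a genuine (row-)stochastic matrix, so that the walk in the statement is a well-defined Markov chain on $U$ with an independent halting coin at each step. Summing \eqref{eq:compute_hidden_transition} over $u_j$ and interchanging the order of summation gives $\sum_{u_j\in U}\mathbf{P_S}[u_i,u_j]=\sum_{v\in N(u_i)}\frac{w(u_i,v)}{d(u_i)}\sum_{u_j\in N(v)}\frac{w(v,u_j)}{d(v)}=1$, and the analogous manipulation of \eqref{eq:compute_attributed_transition} yields $\sum_{u_j\in U}\mathbf{P_A}[u_i,u_j]=1$. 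Since $\mathbf{P}=(1-\beta)\mathbf{P_S}+\beta\mathbf{P_A}$ is a convex combination (as $\beta\in[0,1]$) of two row-stochastic matrices, it is itself row-stochastic; in particular $\mathbf{P}^{\ell}[u_i,u_j]\in[0,1]$ for all $\ell$, so the series in \eqref{eq:random_walk_style} is dominated by $\sum_{\ell\ge 0}\alpha(1-\alpha)^{\ell}=1$ and converges absolutely, which later licenses term-by-term summation.

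Next I would pin down the one-step behaviour and lift it to $\ell$ steps. By Rules 1--2, conditioned on the walker sitting at $u_k$ and on not halting at the present step, it moves to $u_l$ with probability $(1-\beta)\mathbf{P_S}[u_k,u_l]+\beta\mathbf{P_A}[u_k,u_l]=\mathbf{P}[u_k,u_l]$; that is, each non-halting transition is governed exactly by the corresponding row of $\mathbf{P}$, independently of the halting coin. A straightforward induction on $\ell$ then shows that, conditioned on the walk not having halted during its first $\ell$ steps, the position after those $\ell$ transitions is distributed as $\mathbf{e}_{u_i}^{\top}\mathbf{P}^{\ell}$ (where $\mathbf{e}_{u_i}$ is the indicator vector of $u_i$): the base case $\ell=0$ is immediate, and the inductive step multiplies the distribution after $\ell-1$ steps on the right by $\mathbf{P}$ using the one-step law just established.

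Finally I would assemble the two ingredients via the law of total probability over the \emph{halting time} $L$. Because the walker halts at every step with probability $\alpha$ independently of everything else, $\Pr[L=\ell]=\alpha(1-\alpha)^{\ell}$; and, given $L=\ell$, the halting location is precisely the position after exactly $\ell$ transitions, which by the previous paragraph equals $u_j$ with probability $\mathbf{P}^{\ell}[u_i,u_j]$ (the halting decision at step $\ell{+}1$ being independent of the trajectory up to step $\ell$). Therefore the probability that the walk starting at $u_i$ halts at $u_j$ equals $\sum_{\ell=0}^{\infty}\Pr[L=\ell]\cdot\mathbf{P}^{\ell}[u_i,u_j]=\sum_{\ell=0}^{\infty}\alpha(1-\alpha)^{\ell}\mathbf{P}^{\ell}[u_i,u_j]=\pi(u_i,u_j)$, which is exactly \eqref{eq:random_walk_style} from Definition \ref{def:A-RWR}.

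I do not expect a real obstacle: this is the standard "geometric stopping decomposition" of a restart-type random walk, and the only mild subtlety is the bookkeeping around conditioning — being explicit that the halting coins are i.i.d.\ and independent of the position, that "taking $\ell$ steps" means surviving $\ell$ consecutive halting coins, and that the absolute convergence noted above is what makes the interchange of summation and expectation legitimate. If any step needs extra care, it is phrasing the induction hypothesis tightly enough ("conditioned on not having halted in the first $\ell$ steps, the position is $\mathbf{e}_{u_i}^{\top}\mathbf{P}^{\ell}$") so that the conditioning composes cleanly from one step to the next.
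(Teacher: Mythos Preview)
Your proposal is correct and follows essentially the same route as the paper: identify the one-step transition law with the row of $\mathbf{P}=(1-\beta)\mathbf{P_S}+\beta\mathbf{P_A}$, lift to $\ell$ steps to get $\mathbf{P}^{\ell}[u_i,u_j]$, multiply by the geometric stopping weight $\alpha(1-\alpha)^{\ell}$, and sum over $\ell$. You are simply more careful than the paper about the supporting details (row-stochasticity, absolute convergence, explicit conditioning on the halting time), but the core argument is identical.
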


\begin{proof}
  By the definitions of $\mathbf{P_S}$ and $\mathbf{P_A}$, 
    the probability that node $u_i$ walks  to $u_j$ according to Rules 1 and 2 is  $\mathbf{P}[u_i, u_j] = (1 - \beta) \cdot \mathbf{P_S}[u_i, u_j] + \beta \cdot \mathbf{P_A}[u_i, u_j]$.
    Thus, the probability of $\alpha$-attribute-augmented  hidden random walk starting from $u_i$ and reaching $u_j$ at $\ell$-th step is  $(1 - \alpha)^{\ell} \cdot \mathbf{P}^{\ell}[u_i, u_j]$. Given that the walk terminates at its current node with probability $\alpha$ at each step, the probability of starting from $u_i$ and stopping at $u_j$ at step $\ell$ is $\alpha (1 - \alpha)^{\ell} \cdot \mathbf{P}^{\ell}[u_i, u_j]$. As a result, the probability of the $\alpha$-attribute-augmented hidden random walk starting from $u_i$ stops at $u_j$ is $\sum_{\ell=0}^{\infty}\alpha (1 - \alpha)^{\ell} \cdot \mathbf{P}^{\ell}[u_i, u_j]$, which completes the proof.
\end{proof}

The exact AHPP computation can be reframed as solving a corresponding linear system, which entails a time complexity of $O(n^3)$, where $n$ represents the number of vertices. Consequently, in line with the approaches outlined in \cite{wang2017fora, liao2022efficient, liao2023efficient}, this paper focuses on computing the following approximate single-source AHPP queries. 

\stitle{Our Problem.}
	\textit{($\epsilon$-Approximate Single-Source AHPP Query).}
	\textcolor{blue}{Given an attributed bipartite graph $G=(U, V,  E, \mathcal{A}, E_\mathcal{A})$, a restart probability $\alpha$, an attribute jumping probability $\beta$, a source node $u \in U$, and an error threshold $\epsilon$, \textcolor{blue}{the} $\epsilon$-approximate single-source AHPP query returns an estimated $\hat{\pi}(u, u_i)$ for each node $u_i \in U$, satisfying the condition that
\begin{equation}\label{eq:accuracy_assurance}
     |\pi(u,u_i) - \hat{\pi}(u,u_i)| \leq \epsilon
 \end{equation}
 holds with a constant probability.}

\section{Existing Solutions and Their Defects} \label{sec:existing}

Existing solutions for PPR or HPP calculations cannot be applied directly to the proposed $\epsilon$ approximate single-source AHPP query. This is because they neglect the informative node attribute. Thus, in this section, we adapt several existing methods to suit the $\epsilon$-approximate single-source AHPP query, which can be classified into three distinct groups: Monte Carlo-based, Power Iteration-based, and Forward Push-based. 

\subsection{The Monte Carlo Method} \label{subsec:mc}
The Monte Carlo (MC) method \cite{liao2023efficient} is a well-established approach for deriving sample-based estimates of target values through probabilistic simulations. Inspired by interpreting AHPP as $\alpha$-attribute-augmented hidden random walks (Lemma \ref{lem:compute_A-RWR}), the AHPP vector $\boldsymbol{\pi}_u$ can be effectively approximated using the MC method. Furthermore, MC can effectively provide high probability unbiased and accurate estimates of AHPP. The core principle of the MC method involves simulating many random walks from $u$, utilizing the empirical termination distribution as an approximation for $\boldsymbol{\pi}_u$. This approach offers a natural solution for addressing the $\epsilon$-approximate single-source AHPP query. Assume that $\omega$ random walks are generated independently, requiring an expected time of $O(\omega/\alpha)$. According to \cite{liao2023efficient}, setting $\omega = O\left( \frac{2 \left( 1 + \epsilon / 3 \right) \cdot \ln \left( |U| / p_f \right)}{\epsilon^2} \right)$ ensures that for each $u_i \in U$, we have $|\hat{\pi}(u,u_i) - \pi(u,u_i)| < \epsilon$ holds with at least $1 - p_f$ probability. 

\stitle{Limitations of MC.} While the MC can yield high-probability, unbiased, and accurate estimates of AHPP, it is relatively inefficient due to the need for sampling a large number of random walks. Furthermore, to enable random walk sampling on weighted graphs, MC requires constructing alias structures \cite{liao2023efficient} for each node's neighborhood during the preprocessing phase, resulting in significant computational overhead.

\subsection{The Power Iteration Method} \label{subsec:pi}
The Power Iteration (PI) method \cite{wang2017fora} is a fundamental iterative method for computing the entire AHPP vector $\boldsymbol{\pi}_u$, as originally introduced in Google's seminal work. Specifically, PI estimates AHPP values by iteratively solving the following linear system (a variant of Eq. \eqref{eq:random_walk_style}): $\boldsymbol{\pi}_u = (1-\alpha)\cdot\boldsymbol{\pi}_u\cdot\mathbf{P} + \alpha\cdot\mathbf{e}_u$,
where the one-hot vector $\mathbf{e}_u \in \mathbb{R}^{1 \times |U|}$ takes a value of 1 at the $u$-th position and 0 elsewhere, $\boldsymbol{\pi}_u(u_i) = \pi(u,u_i)$ for all $u_i \in U$, and $\mathbf{P} = (1 - \beta) \cdot \mathbf{P_S} + \beta \cdot \mathbf{P_A}$. Algorithm \ref{alg:PI} provides the pseudo-code for the PI method to approximate $\boldsymbol{\pi}_u$ given the input graph $G$ and the source node $u$. It is noteworthy that, in the worst-case scenario, the number of non-zero entries in the transition matrix $\mathbf{P}$ can scale as $O(|U|^2)$. As a result, the computational cost of matrix-vector multiplications in each iteration is $O(|U|^2)$. According to \cite{wang2017fora}, PI guarantees an absolute error bound of $\epsilon$ for each $\pi(u, u_i)$ after $T = \log_{\frac{1}{1 - \alpha}} \frac{1}{\epsilon}$ iterations of matrix multiplications. Thus, PI resolves the single-source AHPP query within an absolute error bound of $\epsilon$ in $O(|U|^2 \cdot T) = O(|U|^2 \cdot \log(1/\epsilon))$ time.

\stitle{Limitations of PI.} Although PI is straightforward to implement, it becomes inefficient on large graphs due to its \( O(|U|^2) \) time and space complexities. Additionally, when the error threshold (i.e., \( \epsilon \)) is small, PI requires a large number of iterations of matrix-vector multiplication, which can be extremely computationally expensive.

\begin{algorithm}[t!]
    \caption{Power Iteration (PI)}
    \label{alg:PI}
    \KwIn{Attributed bipartite graph $G$, source node $u$, restart probability $\alpha$, attribute jumping probability $\beta$, the number of iterations $T$.}
    \KwOut{$\boldsymbol{\pi}_u$.}
    $\boldsymbol{\pi}_u \gets \mathbf{e}_u$\;
    \For{$i \gets 1$ to $T$} {
        $\boldsymbol{\pi}_u \gets (1 - \alpha) \cdot ((1 - \beta) \cdot \boldsymbol{\pi}_u \cdot \mathbf{P_S} + \beta \cdot \boldsymbol{\pi}_u \cdot \mathbf{P_A}) + \alpha \cdot \mathbf{e}_u$\;
    }
    \Return{$\boldsymbol{\pi}_u$}\;
\end{algorithm}

\begin{algorithm}[t!]
    \caption{Forward Push (FP)}
    \label{alg:FP}
    \KwIn{Attributed bipartite graph $G$, source node $u$, restart probability $\alpha$, attribute jumping probability $\beta$, residue threshold $r_{max}$.}
    \KwOut{$\{ \hat{\pi}(u, u_i) \mid u_i \in U \}$.}
    $r(u, u) \gets 1$; $r(u, u_i) \gets 0 \ \forall{u_i \in U\setminus \{u\}}$\;
    $\hat{\pi}(u, u_i) \gets 0 \ \forall{u_i \in U}$\;
    \While {$\exists u_i \in U$ s.t. $r(u, u_i)> r_{max} \cdot (|N(u_i)| + |\mathcal{A}(u_i)|)$} {
        $\hat{\pi}(u, u_i) \gets \hat{\pi}(u, u_i) + \alpha \cdot r(u, u_i)$\;
        $\rho \gets (1 - \alpha) \cdot r(u, u_i)$\;
        $r(u, u_i) \gets 0$\;
        \For {each $v \in N(u_i)$} {
            \For {each $u_j \in N(v)$} {
                $r(u, u_j) \gets r(u, u_j) + (1 - \beta) \cdot \frac{w(u_i, v) \cdot w(v, u_j)}{d(u_i) \cdot d(v)} \cdot \rho$\;
            }
        }
        \For {each $a \in \mathcal{A}(u_i)$} {
            \For {each $u_j \in \mathcal{A}^{-1}(a)$} {
                $r(u, u_j) \gets r(u, u_j) + \frac{\beta \cdot w(u_i, a) \cdot w(u_j, a)\cdot \rho}{\sum_{a_x \in \mathcal{A}(u_i)}{w(u_i, a_x)} \cdot \sum_{u_x \in \mathcal{A}^{-1}(a)}{w(u_x, a)}}$\;
            }
        }
    }
    \Return{$\{ \hat{\pi}(u, u_i) \mid u_i \in U \}$}\;
\end{algorithm}

\subsection{The Forward Push Method} \label{subsec:fp}
The Forward Push (FP) method  \cite{wang2017fora} is a local-push method capable of answering single-source AHPP queries without searching the whole graph. The fundamental idea behind FP is to simulate numerous random walks deterministically by distributing the probability mass from a node to its neighbors. Specifically, given a source node $u$ and a parameter $r_{max} \in [0, 1]$, FP maintains the following information throughout the execution process:
\begin{itemize}
    \item a reserve $\hat{\pi}(u, u_i)$ for all $u_i \in U$: the probability mass that remains at node $u_i$, it is an underestimate of $\pi(u, u_i)$;
    \item a residue $r(u, u_i)$ for all $u_i \in U$: \textcolor{blue}{The probability mass on node $u_i$ will be redistributed to other nodes. In a random walk, $r(u, u_i)$ represents the mass from a walk starting at $u$ that is still alive at $u_i$, where a walk is alive if it has not yet terminated.}
\end{itemize}

\textcolor{blue}{Algorithm~\ref{alg:FP} presents the FP method. It initializes the residue $r(u, u) = 1$ for the source node $u$ and $0$ for all other nodes, and sets the approximate AHPP values $\hat{\pi}(u, u_i) = 0$ for all $u_i \in U$ (Lines 1--2). Iteratively, residues of selected nodes are distributed to their two-hop neighbors and nodes sharing attributes. For a node $u_i$ with large residue ($r(u, u_i) > r_{max} \cdot (|N(u_i)| + |\mathcal{A}(u_i)|)$), the AHPP value is updated as $\hat{\pi}(u, u_i) \mathrel{+}= \alpha \cdot r(u, u_i)$. Each two-hop neighbor $u_j \in \bigcup_{v \in N(u_i)} N(v)$ receives $(1 - \alpha)(1 - \beta) \frac{w(u_i, v) w(v, u_j)}{d(u_i) d(v)} r(u, u_i)$ (Lines 7--9), and each node $u_k$ sharing attributes with $u_i$ receives $(1 - \alpha)\beta \frac{w(u_i, a) w(u_k, a)}{\sum_{a_x \in \mathcal{A}(u_i)} w(u_i, a_x) \sum_{u_x \in \mathcal{A}^{-1}(a)} w(u_x, a)} r(u, u_i)$ (Lines 10--12). After processing, $r(u, u_i)$ is set to zero (Line 6). These steps repeat until all nodes satisfy $r(u, u_i) \le r_{max} \cdot (|N(u_i)| + |\mathcal{A}(u_i)|)$.}

\begin{lemma}\label{lem:fwd_time_complexity}
Given the query node $u$, the time complexity of Algorithm \ref{alg:FP} is $O(\sum_{u_i \in U}\frac{\pi(u, u_i)\cdot (\sum_{v \in N(u_i)}{|N(v)|} + \sum_{a \in \mathcal{A}(u_i)}{|\mathcal{A}^{-1}(a)|})}{\alpha \cdot r_{max} \cdot(|N(u_i)| + |\mathcal{A}(u_i)|)}).$
\end{lemma}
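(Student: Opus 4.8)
The plan is to mimic the standard amortized analysis of Forward Push, adapted to the composite transition matrix $\mathbf{P} = (1-\beta)\mathbf{P_S} + \beta\mathbf{P_A}$. \textbf{Step 1 (invariant).} I would first show that throughout Algorithm~\ref{alg:FP} the identity
$$\pi(u, u_i) = \hat{\pi}(u, u_i) + \sum_{u_j \in U} r(u, u_j)\cdot \pi(u_j, u_i)$$
holds for every $u_i \in U$. It holds after Lines 1--2 because $\hat{\pi}(u,\cdot)\equiv 0$ and $r(u,\cdot) = \mathbf{e}_u$, so the right-hand side is just $\pi(u, u_i)$. To see that one push of a node $u_k$ preserves it, note that Lines 4--6 move $\alpha\,r(u,u_k)$ from the residue of $u_k$ into its reserve and keep $\rho = (1-\alpha)\,r(u,u_k)$ for redistribution; Lines 7--9 add $(1-\beta)\,\mathbf{P_S}[u_k,u_j]$ of $\rho$ to each two-hop neighbour $u_j$ — matching Eq.~\eqref{eq:compute_hidden_transition} — and Lines 10--12 add $\beta\,\mathbf{P_A}[u_k,u_j]$ of $\rho$ to each attribute-sharing node $u_j$ — matching Eq.~\eqref{eq:compute_attributed_transition}. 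Plugging the one-step fixed-point relation $\boldsymbol{\pi}_{u_k} = (1-\alpha)\,\boldsymbol{\pi}_{u_k}\mathbf{P} + \alpha\,\mathbf{e}_{u_k}$ (i.e. Eq.~\eqref{eq:random_walk_style}) into the resulting change of the right-hand side shows that it is unchanged.

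\textbf{Step 2 (bounding the number of pushes).} Since all residues and all AHPP scores are non-negative, the invariant immediately gives $\hat{\pi}(u, u_i) \le \pi(u, u_i)$ at all times. A push of $u_i$ occurs only when $r(u, u_i) > r_{max}\cdot(|N(u_i)| + |\mathcal{A}(u_i)|)$, and it increases the (monotone non-decreasing) reserve $\hat{\pi}(u, u_i)$ by $\alpha\,r(u, u_i) > \alpha\,r_{max}\cdot(|N(u_i)| + |\mathcal{A}(u_i)|)$. Hence the number of times $u_i$ is ever pushed is at most $\pi(u, u_i)\big/\big(\alpha\,r_{max}\cdot(|N(u_i)| + |\mathcal{A}(u_i)|)\big)$.

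\textbf{Step 3 (charging the cost).} A single push of $u_i$ costs $O\!\big(\sum_{v \in N(u_i)}|N(v)| + \sum_{a \in \mathcal{A}(u_i)}|\mathcal{A}^{-1}(a)|\big)$, since Lines 7--9 touch every two-hop structural neighbour once and Lines 10--12 touch every attribute-sharing node once, each update being $O(1)$. Multiplying this per-push cost by the per-node push bound from Step 2 and summing over $u_i \in U$ yields exactly the claimed running time. The main obstacle is Step 1: one must check carefully that the two nested double-loops reproduce precisely the entries $(1-\beta)\,\mathbf{P_S}[u_k,u_j]$ and $\beta\,\mathbf{P_A}[u_k,u_j]$ — in particular that the normalizing denominators in Lines 9 and 12 are the correct ones and that mass is neither lost nor double-counted when $u_j$ is simultaneously a two-hop neighbour and an attribute-sharer of $u_k$ — so that the residue/reserve invariant really is maintained by every push. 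Once that is in place, Steps 2 and 3 are routine amortized bookkeeping.
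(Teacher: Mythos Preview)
Your proposal is correct and follows essentially the same amortized argument as the paper: bound the number of pushes of each $u_i$ by $\pi(u,u_i)/(\alpha\,r_{max}\,(|N(u_i)|+|\mathcal{A}(u_i)|))$ via the fact that $\hat{\pi}(u,u_i)\le\pi(u,u_i)$, then multiply by the per-push cost and sum. The only difference is that the paper's proof of this lemma simply \emph{asserts} $\hat{\pi}(u,u_i)\le\pi(u,u_i)$ and defers the residue/reserve invariant you spell out in Step~1 to a later lemma (Lemma~\ref{lem:fwd_equation}, stated for \algot); your version is self-contained, which is arguably cleaner.
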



\begin{proof}
In each iteration for a selected node $u_i$ (Line 3), a portion $\alpha$ of the residue $r(u, u_i) > r_{max} \cdot (|N(u_i)| + |\mathcal{A}(u_i)|)$ is converted to its approximate AHPP $\pi(u, u_i)$, while the remaining residue is distributed among two-hop neighbors $\bigcup_{v \in N(u_i)}{N(v)}$ as well as nodes sharing attributes with $u_i$. Since $\hat{\pi}(u, u_i) \le \pi(u, u_i)$, the number of iterations for $u_i$ to convert its residue entirely is bounded by $\frac{\pi(u, u_i)}{\alpha \cdot r_{max} \cdot (|N(u_i)| + |\mathcal{A}(u_i)|)}$. Each iteration for $u_i$ has a cost proportional to $(\sum_{v \in N(u_i)}{|N(v)|} + \sum_{a \in \mathcal{A}(u_i)}{|\mathcal{A}^{-1}(a)|})$. Thus, the cost for $u_i$  is $\frac{\pi(u, u_i)\cdot (\sum_{v \in N(u_i)}{|N(v)|} + \sum_{a \in \mathcal{A}(u_i)}{|\mathcal{A}^{-1}(a)|})}{\alpha \cdot r_{max} \cdot(|N(u_i)| + |\mathcal{A}(u_i)|)}$.
The total time cost for all nodes $u_i \in U$ is obtained by summing the above expression overall $u_i$. 
\end{proof}

By setting $r_{max} = \epsilon / (|E| + |\mathcal{A}|)$, as noted in \cite{wang2017fora}, FP achieves the $\epsilon$ absolute error bound. 

\stitle{Limitations of FP.}  Although FP is straightforward, it becomes slow when \( r_{\text{max}} \) is small, as this necessitates more push operations. Consequently, it struggles to efficiently compute AHPP values for large-scale graphs.

\section{Our Proposed Solution}\label{sec:proposed_algorithm} As discussed in Section \ref{sec:existing}, existing methods  must construct the state transition matrix $\mathbf{P}$ (Section \ref{subsec:ahpp}) either explicitly or implicitly. This construction is prohibitively costly due to the significant requirements for construction time and storage space (up to \(O(|U|^2)\) in the worst case). To address this challenge, in this section, we first propose a basic \underline{A}lternating \underline{P}ropagation \underline{P}ush (\algo) algorithm, which is based on two key observations regarding the Forward Push process. To further enhance computational efficiency, we introduce an advanced \underline{A}daptive \underline{S}ynchronous \underline{R}esidue \underline{P}ush (\algot) algorithm. \algot improves upon the basic method by integrating synchronous push operations and applying a more stringent residue threshold to achieve better performance. 

\begin{figure}[t!]
	\centering
	\includegraphics [width=0.4\textwidth] {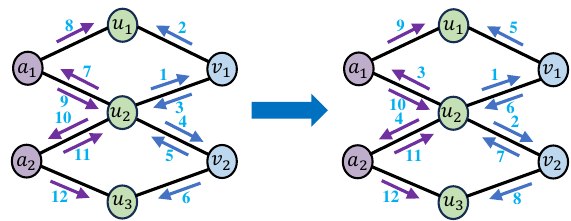}
	\caption{A sketch illustrating the transformation of the residue propagation process for node \( u_2 \), where the number on each edge represents the step of the operation (Note: This figure shows only a portion of Figure \ref{fig:preliminary_example}). }
\label{fig:APP_observation1} \vspace*{-0.3cm}
\end{figure}

\begin{figure}[t!]
	\centering
	\includegraphics [width=0.4\textwidth] {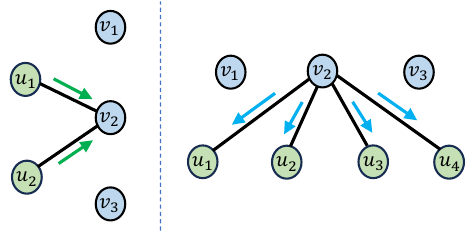}
	\caption{Efficiency bottlenecks in the Forward Push method: Residue increments from $u_1$ and $u_2$ to node \( v_2 \) are redistributed to \( v_2 \)'s neighbors twice, causing unnecessary overhead. (Note: This figure shows only a portion of Figure \ref{fig:preliminary_example}).}
\label{fig:APP_observation2} \vspace*{-0.3cm}
\end{figure}

\subsection{Basic Solution: Alternating Propagation Push} \label{subsec:APP}
\stitle{Several In-depth Observations.} (1) \textit{The equivalence of residue propagation}: in the Forward Push (i.e., Algorithm \ref{alg:FP}) method, the residue propagation process for a selected node $u_i$ can be separated into two distinct steps: (i) $u_i$ propagates its residue to each neighbor $v \in N(u_i)$ and each attribute node $a \in \mathcal{A}(u_i)$, with residues incremented by $(1 - \alpha) \cdot (1 - \beta) \cdot \frac{w(u_i, v)}{d(u_i)} \cdot r(u, u_i)$ and $(1 - \alpha) \cdot \beta \cdot \frac{w(u_i, a)}{\sum_{a_x \in \mathcal{A}(u_i)}{w(u_i, a_x)}} \cdot r(u, u_i)$, respectively, where attributes are modeled as nodes; and (ii) all non-zero residues of nodes in $V \cup \mathcal{A}$ are subsequently pushed back to $U$ losslessly. That is, node $v \in V$ (or attribute $a \in \mathcal{A}$) uniformly pushes back the residue obtained in (i) to its neighbors in $U$. An illustrative example of this observation is provided in Figure \ref{fig:APP_observation1}. Specifically, according to Algorithm \ref{alg:FP}, when node \( u_2 \) is selected, it first distributes part of its residue to node \( v_1 \), which in turn propagates its residue to all its neighbors immediately. A similar process occurs for nodes \( v_2 \), \( a_1 \), and \( a_2 \). The left part of Figure \ref{fig:APP_observation1} depicts this scenario. However, this process can be reformulated as described in the aforementioned steps. In particular, when node \( u_2 \) is selected, it propagates its residue to each of its neighbors, namely \( v_1 \) and \( v_2 \), as well as to each attribute node, \( a_1 \) and \( a_2 \). Subsequently, all residues of nodes \( v_1 \), \( v_2 \), \( a_1 \), and \( a_2 \) are pushed back to \( U \) without any loss. The right part of Figure \ref{fig:APP_observation1} illustrates this refined process. (2) \textit{Redundancy in residue propagation}:  redundancy arises when two nodes $u_i$ and $u_j$ in $U$ share a common neighbor $v_k \in V$. Specifically, during the iterations of $u_i$ and $u_j$, $v_k$ receives residues independently and then performs redundant push operations to its own neighbors, causing unnecessary computation. To illustrate, Figure \ref{fig:APP_observation2} presents an example of this observation. In Figure \ref{fig:APP_observation2}, the nodes \( u_1 \) and \( u_2 \) share a common neighbor, namely \( v_2 \). When \( u_1 \) propagates its residue to \( v_2 \), \( v_2 \) subsequently distributes the received residue to all its neighbors. Similarly, when \( u_2 \) propagates its residue to \( v_2 \), \( v_2 \) repeats the same operation. This results in redundant operations, which are unnecessary.



\begin{algorithm}[!t]
    \caption{Alternating Propagation Push (APP)}
    \label{alg:APP}
    \KwIn{Attributed bipartite graph $G$, source node $u$, restart probability $\alpha$, attribute jumping probability $\beta$, residue threshold $r_{max}$.}
    \KwOut{$\{ \hat{\pi}(u, u_i) \mid u_i \in U \}$.}
    $r(u, u) \gets 1$; $r(u, x) \gets 0 \ \forall{x \in U \cup V \cup \mathcal{A} \setminus \{u\}}$\;
    $\hat{\pi}(u, u_i) \gets 0 \ \forall{u_i \in U}$\;
    \While {$\exists u_i \in U$ s.t. $r(u, u_i)> r_{max} \cdot (|N(u_i)| + |\mathcal{A}(u_i)|)$} {
            \For {$u_i \in U$ s.t. $r(u, u_i)> r_{max} \cdot (|N(u_i)| + |\mathcal{A}(u_i)|)$} {
                $\hat{\pi}(u, u_i) \gets \hat{\pi}(u, u_i) + \alpha \cdot r(u, u_i)$\;
                $r(u, u_i) \gets (1 - \alpha) \cdot r(u, u_i)$\;
              \For {each $v_j \in N(u_i)$} {
                  $r(u, v_j) \gets r(u, v_j) + (1 - \beta) \cdot \frac{w(u_i, v_j)}{d(u_i)} \cdot r(u, u_i)$\;
              }
              \For {each $a_j \in \mathcal{A}(u_i)$} {
                  $r(u, a_j) \gets r(u, a_j) + \beta \cdot \frac{w(u_i, a_j)}{\sum_{a_x \in \mathcal{A}(u_i)}{w(u_i, a_x)}} \cdot r(u, u_i)$\;
              }
                $r(u, u_i) \gets 0$\;
            }
            \For {$v_i \in V$ s.t. $r(u, v_i) > 0$} {
                \For {each $u_j \in N(v_i)$} {
                    $r(u, u_j) \gets r(u, u_j) + \frac{w(v_i, u_j)}{d(v_i)} \cdot r(u, v_i)$\;
                }
                $r(u, v_i) \gets 0$\;
            }
            \For {$a_i \in \mathcal{A}$ s.t. $r(u, a_i) > 0$} {
                \For {each $u_j \in \mathcal{A}^{-1}(a_i)$} {
                    $r(u, u_j) \gets r(u, u_j) + \frac{w(u_j, a_i)}{\sum_{u_x \in \mathcal{A}^{-1}(a_i)}{w(u_x, a_i)}} \cdot r(u, a_i)$\;
                }
               $r(u, a_i) \gets 0$\;
            }
    }
    \Return{$\{ \hat{\pi}(u, u_i) \mid u_i \in U \}$}\;
\end{algorithm}

\stitle{Implementation Details.} Combining the aforementioned observations, \algo refines the residue distribution process, thereby enhancing computational efficiency. Specifically, Algorithm \ref{alg:APP} outlines the pseudocode for the \algo method. Taking as input an attributed bipartite graph $G$, a source node $u$, a restart probability $\alpha$, an attribute jumping probability $\beta$, and a residue threshold $r_{max}$, \algo begins by initializing the residue $r(u, u) = 1$ for the source node $u$, $r(u, x) = 0$ for all other nodes, and setting the approximate AHPP values $\hat{\pi}(u, u_i) = 0$ for all $u_i \in U$ (Lines 1-2). The algorithm then iteratively distributes the residues of selected nodes to their neighbors. For any node $u_i \in U$ with a residue exceeding $r_{max} \cdot (|N(u_i)| + |\mathcal{A}(u_i)|)$, the AHPP value $\hat{\pi}(u, u_i)$ is updated by $\alpha \cdot r(u, u_i)$ (Line 5). Its neighbor $v_j \in N(u_i)$ receives a residue increment of $(1 - \alpha) \cdot (1 - \beta) \cdot \frac{w(u_i, v_j)}{d(u_i)} \cdot r(u, u_i)$ (Lines 7-8), while each attribute node $a_j \in \mathcal{A}(u_i)$ gains a residue increment of $(1 - \alpha) \cdot \beta \cdot \frac{w(u_i, a_j)}{\sum_{a_x \in \mathcal{A}(u_i)}{w(u_i, a_x)}} \cdot r(u, u_i)$ (Lines 9-10). Once all neighbors and attribute nodes are processed, the residue $r(u, u_i)$ is reset to zero (Line 11). In the subsequent step, all non-zero residues of nodes in $V \cup \mathcal{A}$ are pushed back to $U$, ensuring no residue loss (Lines 12–19). The procedure alternates between pushing residues from $U$ to $V \cup \mathcal{A}$ and vice versa until every node $u_i \in U$ satisfies the threshold $r(u, u_i) \le r_{max} \cdot (|N(u_i)| + |\mathcal{A}(u_i)|)$, and no nodes in $V \cup \mathcal{A}$ have positive residues. This alternating residue propagation avoids materializing the transition matrix $\mathbf{P}$, significantly improving computational costs. 

\stitle{Complexity and correctness of Algorithm \ref{alg:APP}.} Algorithm \ref{alg:APP} is an adaptation of Algorithm \ref{alg:FP}, and its termination condition is identical to that of Algorithm \ref{alg:FP}. Consequently, Algorithm \ref{alg:APP} has the same absolute error and time complexity as Algorithm \ref{alg:FP}. Nonetheless, the experimental results presented in Section \ref{sec:experiments} reveal that Algorithm \ref{alg:APP} achieves a substantial improvement in computational efficiency compared to Algorithm \ref{alg:FP}.

\subsection{Advanced Solution: Adaptive Synchronous Residue Push} \label{subsec:ASRP}
\subsubsection{Important Observations and Implementation Details} We propose the \algot method based on two key observations: (1) \algo faces notable efficiency challenges in certain scenarios. For example, consider a node $v \in V$ in the attributed bipartite graph $G$, which has a large number of neighbors in the node set $U$, i.e., $|N(v)|$ is extensive. As outlined in Lines 7–8 and 13-14 of Algorithm \ref{alg:APP}, during each iteration, $v$ (i) receives residues from its selected neighbors and subsequently (ii) performs $|N(v)|$ push operations to distribute residues among its neighbors. Over multiple iterations, the majority of $v$'s neighbors may exhibit diminished residues, resulting in only a small subset being selected. Consequently, reducing a substantial portion of $v$'s residue often requires numerous iterations, with each involving at least $|N(v)|$ push operations and multiple random accesses to its neighboring nodes. (2) To meet the accuracy requirements defined in Eq. \eqref{eq:accuracy_assurance}, Algorithm \ref{alg:APP} selects nodes based on their degree and $r_{max}$, an error-threshold-independent metric, leading to unnecessary additional iterations.

Building upon the above observations, we propose \algot, which incorporates synchronous push operations and a more effective residue threshold. To address the challenge outlined in Observation 1, synchronous push operations are utilized. This technique aggregates residues from $v$'s neighbors in a single batch before redistributing them back to $v$'s neighbors, thereby streamlining the propagation process and mitigating computational overhead. However, synchronous push operations are agnostic to the specific residues associated with individual nodes, leading to a substantial number of superfluous operations that ultimately reduce overall efficiency. These limitations motivate the development of an adaptive mechanism to enhance the execution of \algo. Our proposed method adopts a greedy and adaptive approach, dynamically balancing the computational costs between \algo and synchronous push operations. Specifically, \algo is executed initially, while its actual computational cost is closely monitored. When the observed cost surpasses the estimated cost of synchronous push operations, a transition is made. This transition is regulated through a dynamically controlled threshold. Additionally, to address the limitation outlined in Observation 2, we introduce a more rigorous error threshold for \algo to improve precision and efficiency. The details of the proposed method are elaborated in Algorithm \ref{alg:ASRP}.

\begin{algorithm}[t]
		\caption{\small Adaptive Synchronous Residue Push (ASRP)}
		\label{alg:ASRP}
        \KwIn{Attributed bipartite graph $G$, source node $u$, restart probability $\alpha$, attribute jumping probability $\beta$, error threshold $\epsilon$, parameter $\lambda$}
        \KwOut{$\{ \hat{\pi}(u, u_i) \mid u_i \in U \}$}
        \nonl Lines 1-2 are the same as Lines 1-2 in Algorithm \ref{alg:APP}\;
        \setcounter{AlgoLine}{2}
        $\epsilon_f \gets \epsilon / \lambda$\;
        $n_p \gets 0$\;
        \tcc{Modified Alternating Propagation Push}
        \While {$\exists u_i \in U$ s.t. $r(u, u_i) > \epsilon_f$} {
            \For {$u_i \in U$ s.t. $r(u, u_i) > \epsilon_f$} {
                \nonl Lines 7-13 are the same as Lines 5-11 in Algorithm \ref{alg:APP}\;
                \setcounter{AlgoLine}{13}
                $n_p \gets n_p + |N(u_i)| + |\mathcal{A}(u_i)|$\;
            }
            \For {$v_i \in V$ s.t. $r(u, v_i) > 0$} {
                \nonl Lines 16-18 are the same as Lines 13-15 in Algorithm \ref{alg:APP}\;
                \setcounter{AlgoLine}{18}
                $n_p \gets n_p + |N(v_i)|$\;
            }
            \For {$a_i \in \mathcal{A}$ s.t. $r(u, a_i) > 0$} {
                \nonl Lines 21-23 are the same as Lines 17-19 in Algorithm \ref{alg:APP}\;
                \setcounter{AlgoLine}{23}
                $n_p \gets n_p + |\mathcal{A}^{-1}(a_i)|$\;
            }
            \If {$n_p\geq 2 \cdot (|E| + |E_\mathcal{A}|) \cdot \left(\log_{\frac{1}{1-\alpha}}{\frac{1}{\lambda \cdot \max_{u_i \in U}{r(u, u_i)}}}\right) $} {
                \bf{break}\;
            }
        }
        \tcc{Synchronous Push}
        \While {$\exists u_i \in U$ s.t. $r(u, u_i) > \epsilon_f$} {
            \nonl Lines 28-43 are identical to Lines 4-19 in Algorithm \ref{alg:APP} with $r_{max} \cdot (|N(u_i)| + |\mathcal{A}(u_i)|)$ replaced by 0\;
            \setcounter{AlgoLine}{43}
        }
        \Return{$\{ \hat{\pi}(u, u_i) \mid u_i \in U \}$}\;
\end{algorithm}

Specifically, Algorithm \ref{alg:ASRP} presents the pseudocode of \algot. Given an attributed bipartite graph $G$, a source node $u$, a restart probability $\alpha$, an attribute jumping probability $\beta$, an absolute error threshold $\epsilon$, and a parameter $\lambda$, \algot begins by initializing several parameters. It sets the reserve as $\hat{\pi}(u, u_i)=0$, for all $u_i \in U$ and the residue as $r(u, x) = 0$ for all $x \in U\cup V\cup \mathcal{A}$, except for $r(u, u) = 1$ (Lines 1-2). Additionally, a refined error threshold is calculated as $\epsilon_f = \epsilon / \lambda$, and the number of performed push operations is initialized to $n_p = 0$ (Lines 3-4). Subsequently, Algorithm \ref{alg:ASRP} initiates the asynchronous push process, following the procedure outlined in Lines 3–19 of Algorithm \ref{alg:APP}, with a key modification: it replaces the residue threshold $r_{max} \cdot (|N(u_i)| + |\mathcal{A}(u_i)|)$ with $\epsilon_f$. During this process, the recorded cost $n_p$ of the modified Alternating Propagation Push is incremented by $(|N(u_i)| + |\mathcal{A}(u_i)|)$ (or $|N(v_i)|$, $|\mathcal{A}^{-1}(a_i)|$) whenever the neighboring nodes of $u_i \in U$ (or $v_i \in V$, $a_i \in \mathcal{A}$) are accessed (Lines 14, 19, and 24).
Furthermore, at Lines 25-26, if Algorithm \ref{alg:ASRP} exhausts its computational budget for asynchronous pushes before meeting the termination criteria outlined in Line 5, it strategically transitions to iterative synchronous pushes. In each iteration of synchronous pushes, \algot performs push operations for all nodes with positive residues, as described in Lines 3-19 of Algorithm \ref{alg:APP}, with the exception that the residue threshold is replaced by zero. Algorithm \ref{alg:ASRP} terminates and returns $\{ \hat{\pi}(u, u_i) \mid u_i \in U \}$ when $r(u, u_i) \leq \epsilon_f$ for all $u_i \in U$.

\subsubsection{Theoretical Analysis} Next, we analyze the time complexity and correctness of Algorithm \ref{alg:ASRP}.

\begin{theorem}[Time Complexity]\label{thm:ASRP_Time}
The time complexity of Algorithm \ref{alg:ASRP} is evaluated as $O\left((|E| + |E_\mathcal{A}|)\cdot \log(1/\epsilon)\right)$.
\end{theorem}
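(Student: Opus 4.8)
The plan is to bound the running times of the two \textbf{while} loops of Algorithm~\ref{alg:ASRP} separately --- the \emph{modified \algo} phase (Lines~5--26) and the \emph{Synchronous Push} phase (Lines~27--43) --- and show each is $O\!\left((|E|+|E_\mathcal{A}|)\log(1/\epsilon)\right)$. First I would establish a per-iteration cost that both loops share: within one iteration, every active $u_i\in U$ does work $\Theta(|N(u_i)|+|\mathcal{A}(u_i)|)$, every active $v\in V$ does work $\Theta(|N(v)|)$, and every active $a\in\mathcal{A}$ does work $\Theta(|\mathcal{A}^{-1}(a)|)$; since $\sum_{u_i\in U}|N(u_i)|=\sum_{v\in V}|N(v)|=|E|$ and $\sum_{u_i\in U}|\mathcal{A}(u_i)|=\sum_{a\in\mathcal{A}}|\mathcal{A}^{-1}(a)|=|E_\mathcal{A}|$, one iteration costs $O(|E|+|E_\mathcal{A}|)$ (the $O(|U|+|V|+|\mathcal{A}|)$ overhead of locating active nodes is also $O(|E|+|E_\mathcal{A}|)$ under the usual no-isolated-vertex assumption, or can be removed with active-node queues). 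Hence it suffices to bound the number of iterations of each loop by $O(\log(1/\epsilon))$; for the first loop it is cleanest to bound the monitored counter $n_p$, whose value at termination equals that phase's total work up to a constant factor.

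For the modified \algo phase I would combine the loop guard in Line~5 with the adaptive break test in Line~25. While the guard holds, some $u_i$ has $r(u,u_i)>\epsilon_f=\epsilon/\lambda$, so $\max_{u_i\in U}r(u,u_i)>\epsilon/\lambda$, and therefore the break threshold $2(|E|+|E_\mathcal{A}|)\log_{\frac{1}{1-\alpha}}\!\bigl(1/(\lambda\max_{u_i}r(u,u_i))\bigr)$ lies below $2(|E|+|E_\mathcal{A}|)\log_{\frac{1}{1-\alpha}}(1/\epsilon)=O((|E|+|E_\mathcal{A}|)\log(1/\epsilon))$, using that a constant $\alpha$ makes $\log_{\frac{1}{1-\alpha}}(1/\epsilon)=\Theta(\log(1/\epsilon))$. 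Now consider the penultimate iteration of this loop: the guard still held after it (another iteration follows), so the break threshold evaluated there was $O((|E|+|E_\mathcal{A}|)\log(1/\epsilon))$ and, since the break did not fire, $n_p$ was below it at that moment; the final iteration adds at most $O(|E|+|E_\mathcal{A}|)$ more, so the counter at termination --- whether the phase ends by exhausting its budget (Line~26) or by every residue dropping to $\le\epsilon_f$ --- is $O((|E|+|E_\mathcal{A}|)\log(1/\epsilon))$ (the case of at most one iteration being trivial). This bounds Phase~1's time by $O((|E|+|E_\mathcal{A}|)\log(1/\epsilon))$.

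For the Synchronous Push phase I would track the $\ell_1$-mass that the residue places on $U$. When the phase starts, all residue lies on $U$ (the preceding \algo iteration ends by flushing every residue on $V\cup\mathcal{A}$ back to $U$), and its total $M_0$ is at most $1$ because reserve-plus-residue is invariantly $1$. A single synchronous round moves, from each $u_i\in U$, an $\alpha$-fraction of $r(u,u_i)$ into the reserve and the remaining $(1-\alpha)$-fraction out to $V\cup\mathcal{A}$ (split $(1-\beta)$ to $V$ and $\beta$ to $\mathcal{A}$), then flushes all residue on $V$ and on $\mathcal{A}$ back to $U$; since each row of $\mathbf{P_S}$ and of $\mathbf{P_A}$ (hence of $\mathbf{P}$) sums to $1$ --- a one-line check from Eqs.~\eqref{eq:compute_hidden_transition} and \eqref{eq:compute_attributed_transition} --- that flush is mass-preserving, so the $U$-mass after the round is exactly $(1-\alpha)$ times before. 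Thus after $k$ rounds the $U$-mass is $(1-\alpha)^kM_0$, and as residues remain nonnegative, $\max_{u_i}r(u,u_i)\le(1-\alpha)^kM_0\le(1-\alpha)^k$; this falls below $\epsilon_f=\epsilon/\lambda$ once $k\ge\log_{\frac{1}{1-\alpha}}(\lambda/\epsilon)=O(\log(1/\epsilon))$, so the phase halts within $O(\log(1/\epsilon))$ iterations and runs in $O((|E|+|E_\mathcal{A}|)\log(1/\epsilon))$ time. Summing the two phases gives the claimed complexity.

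The hard part, I expect, is the Phase~1 argument: the break threshold depends on the \emph{current} maximum residue, which is not monotone (mass can concentrate within $U$ from one iteration to the next), so one must argue carefully that this threshold stays $O((|E|+|E_\mathcal{A}|)\log(1/\epsilon))$ over the portion of the execution that matters, that the loop overshoots its budget by at most one iteration's work, and that the scenario in which the loop terminates naturally before the break ever triggers is covered as well. The Phase~2 geometric-decay step is comparatively routine once the row-stochasticity of $\mathbf{P_S}$ and $\mathbf{P_A}$ and the reserve-plus-residue invariant are in hand.
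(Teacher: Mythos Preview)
Your argument is sound and in places more careful than the paper's (the penultimate-iteration device to absorb the one-iteration overshoot, and the explicit $\ell_1$-mass tracking using row-stochasticity of $\mathbf{P_S}$ and $\mathbf{P_A}$, are both cleaner than the corresponding passages in the paper). The one structural difference is that you bound the two phases \emph{independently}: Phase~1 by $O\bigl((|E|+|E_\mathcal{A}|)\log(1/\epsilon)\bigr)$ via the observation that while the guard holds the break threshold is capped at this value, and Phase~2 by $O\bigl((|E|+|E_\mathcal{A}|)\log(\lambda/\epsilon)\bigr)$ via the crude bound $M_0\le 1$. The paper instead \emph{couples} the two phases through $r^\star=\max_{u_i}r(u,u_i)$ at the moment of transition: Phase~1 costs essentially the break budget $O\bigl((|E|+|E_\mathcal{A}|)\log_{\frac{1}{1-\alpha}}\frac{1}{\lambda r^\star}\bigr)$, and Phase~2 needs $t=\log_{\frac{1}{1-\alpha}}(r^\star/\epsilon_f)=\log_{\frac{1}{1-\alpha}}\frac{\lambda r^\star}{\epsilon}$ rounds; the two sum telescopically to $\log_{\frac{1}{1-\alpha}}(1/\epsilon)$, with $\lambda$ and $r^\star$ cancelling. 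What this buys the paper is a bound free of $\lambda$; your route leaves a residual $\log\lambda$ term in Phase~2 that your final line hides by writing $\log_{\frac{1}{1-\alpha}}(\lambda/\epsilon)=O(\log(1/\epsilon))$, which is only valid if $\lambda$ is treated as an $O(1)$ input constant. That is the single soft spot: either state that assumption explicitly, or start the Phase~2 geometric decay from the transition residue rather than from $1$ so that the $\lambda$ cancels as in the paper.
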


\begin{proof}
If Algorithm \ref{alg:ASRP} terminates upon depleting its allocated computation budget for asynchronous pushes, the cost incurred by this phase is hence $ O((|E| + |E_\mathcal{A}|) \cdot (\log{\frac{1}{\lambda \cdot \max_{u_i \in U}{r(u, u_i)}}})).$ During the synchronous push phase, each iteration reduces a portion $\alpha$ of $r(u, u_i)$ for all $u_i \in U$, converting it into approximate AHPP values. Let $t$ denote the number of iterations required for synchronous pushes. The process halts when $\forall u_i \in U$, $r^{(t)}(u, u_i) \leq \epsilon_f$, where $r^{(t)}(u, u_i)$ represents the residue after $t$ iterations. This condition gives the following inequality:
\begin{align*}
\textstyle (1 - \alpha \cdot \sum_{\ell = 0}^{t - 1}{(1 - \alpha)^{\ell}}) \cdot \max_{u_i \in U}{r(u, u_i)} \le \epsilon_f.
\end{align*}
Solving for $t$, we have $t = \log_{\frac{1}{1 - \alpha}} ( \frac{\max_{u_i \in U} r(u, u_i)}{\epsilon_f})$. Each iteration (Lines 27-43) of synchronous pushes involves traversing the graph; thus, the time cost per iteration is bounded by $O(|E| + |E_\mathcal{A}|)$. Thus, the overall time complexity of Algorithm \ref{alg:ASRP} is $O\left(n_p + (|E| + |E_\mathcal{A}|) \cdot t \right)$, which simplifies to $O((|E| + |E_\mathcal{A}|) \cdot \log(\frac{1}{\epsilon}))$.
\end{proof}

\begin{lemma}[Iterative invariance]\label{lem:fwd_equation} At the termination of any iteration in Algorithm \ref{alg:ASRP}, the following equation holds.
\begin{equation}\label{eq:fwd_equation}
    \pi(u, u_i) = \hat{\pi}(u, u_i) + \sum_{u_j \in U}{r(u, u_j) \cdot \pi(u_j, u_i)}
\end{equation}
\end{lemma}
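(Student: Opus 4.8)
The plan is to prove a strengthened version of Eq.~\eqref{eq:fwd_equation} by induction on the number of elementary push operations performed so far, and then specialize it to the moment a while-loop iteration ends. To state the stronger invariant I would first extend the AHPP score to the auxiliary nodes: for $v\in V$ put $\pi(v,u_i):=\sum_{u_j\in N(v)}\tfrac{w(v,u_j)}{d(v)}\,\pi(u_j,u_i)$, and for $a\in\mathcal{A}$ put $\pi(a,u_i):=\sum_{u_j\in\mathcal{A}^{-1}(a)}\tfrac{w(u_j,a)}{\sum_{u_x\in\mathcal{A}^{-1}(a)}w(u_x,a)}\,\pi(u_j,u_i)$; intuitively these are the stopping probabilities of the two-layer unfolding of the $\alpha$-attribute-augmented hidden random walk of Lemma~\ref{lem:compute_A-RWR} once it first lands on a $V$- or $\mathcal{A}$-node. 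The claim I would carry through the induction is
\[
\pi(u,u_i)=\hat{\pi}(u,u_i)+\sum_{x\in U\cup V\cup\mathcal{A}}r(u,x)\cdot\pi(x,u_i)\qquad\text{for all }u_i\in U,
\]
holding after \emph{every} single push, not only at iteration boundaries.

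The algebraic engine is the one-step fixed-point identity for AHPP. From Eq.~\eqref{eq:random_walk_style} one gets $\pi(u_k,u_i)=\alpha\cdot\mathbf{1}[u_k=u_i]+(1-\alpha)\sum_{u_l\in U}\mathbf{P}[u_k,u_l]\,\pi(u_l,u_i)$; substituting $\mathbf{P}=(1-\beta)\mathbf{P_S}+\beta\mathbf{P_A}$ and then Eqs.~\eqref{eq:compute_hidden_transition} and \eqref{eq:compute_attributed_transition}, the right-hand side collapses (using the two definitions above) into $\alpha\cdot\mathbf{1}[u_k=u_i]$ plus $(1-\alpha)(1-\beta)$ times a weighted sum of $\pi(v,u_i)$ over $v\in N(u_k)$ with weights $\tfrac{w(u_k,v)}{d(u_k)}$, plus $(1-\alpha)\beta$ times a weighted sum of $\pi(a,u_i)$ over $a\in\mathcal{A}(u_k)$ with weights $\tfrac{w(u_k,a)}{\sum_{a_x\in\mathcal{A}(u_k)}w(u_k,a_x)}$. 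This identity, together with the two defining equations for $\pi(v,u_i)$ and $\pi(a,u_i)$ themselves, are exactly the three conservation laws obeyed by the three kinds of push in Algorithm~\ref{alg:ASRP} (which are the pushes of Algorithm~\ref{alg:APP}).

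The induction is then mechanical. Base case: after Lines~1--2, $\hat{\pi}\equiv 0$ and $r(u,\cdot)$ is the indicator at $u$ with no residue on $V\cup\mathcal{A}$, so the right-hand side is $\pi(u,u_i)$. Inductive step: a push from $u_k\in U$ adds $\alpha\,r(u,u_k)$ to $\hat{\pi}(u,u_k)$, zeroes $r(u,u_k)$, and deposits $(1-\alpha)(1-\beta)\tfrac{w(u_k,v)}{d(u_k)}r(u,u_k)$ onto each $v\in N(u_k)$ and $(1-\alpha)\beta\tfrac{w(u_k,a)}{\sum_{a_x}w(u_k,a_x)}r(u,u_k)$ onto each $a\in\mathcal{A}(u_k)$, so the net change of the right-hand side equals $r(u,u_k)$ times the difference of the two sides of the one-step identity at $(u_k,u_i)$, which is $0$; a push from $v\in V$ (resp.\ $a\in\mathcal{A}$) zeroes $r(u,v)$ (resp.\ $r(u,a)$) and redistributes it to $N(v)$ (resp.\ $\mathcal{A}^{-1}(a)$) with exactly the weights appearing in the definition of $\pi(v,u_i)$ (resp.\ $\pi(a,u_i)$), so the net change is again $0$. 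This covers both the modified-APP phase (Lines~5--26) and the synchronous-push phase (Lines~27--43), and the \textbf{break} at Line~26 triggers only at an iteration boundary. Finally, at the end of any iteration of either while loop, every $v\in V$ and $a\in\mathcal{A}$ has been pushed and its residue reset to $0$, and it cannot receive new mass within the same iteration because $V$- and $\mathcal{A}$-pushes send mass only into $U$; hence the $V\cup\mathcal{A}$ terms vanish from the strengthened invariant and we recover precisely Eq.~\eqref{eq:fwd_equation}.

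I expect the only genuine obstacle to be the bookkeeping of the second paragraph: choosing the auxiliary scores $\pi(v,u_i),\pi(a,u_i)$ so that (i) each of the three push types corresponds to exactly one conservation law and (ii) the collapsed one-step relation for $u_k\in U$ really is the AHPP fixed-point equation written through the two-layer expansion of $\mathbf{P}$. Once $\mathbf{P}=(1-\beta)\mathbf{P_S}+\beta\mathbf{P_A}$ is expanded via Eqs.~\eqref{eq:compute_hidden_transition} and \eqref{eq:compute_attributed_transition}, the rest is routine. One caveat worth flagging explicitly is that it is the strengthened invariant, not Eq.~\eqref{eq:fwd_equation} alone, that is preserved by individual pushes; Eq.~\eqref{eq:fwd_equation} itself is valid only at the iteration boundaries where $V\cup\mathcal{A}$ carries no residue, which is exactly what the lemma asserts.
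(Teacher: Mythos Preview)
Your proof is correct and takes a genuinely different route from the paper's. The paper inducts on whole iterations, staying entirely on $U$: it encodes the net effect of one iteration as a single $\mathbf{P}$-step (every $u_x$ with residue above threshold sends $(1-\alpha)r^{(\ell)}(u,u_x)\mathbf{P}[u_x,\cdot]$ back into $U$), defines $\Delta(u_i)$ as the iteration-to-iteration change of the right-hand side of Eq.~\eqref{eq:fwd_equation}, and verifies $\Delta(u_i)=0$ by a direct algebraic computation split into the two cases $r^{(\ell)}(u,u_i)>\epsilon_f$ and $r^{(\ell)}(u,u_i)\le\epsilon_f$. You instead unfold the two-layer structure, extend $\pi(\cdot,u_i)$ to $V\cup\mathcal{A}$, and induct on single pushes; each of the three push types then matches one conservation law with essentially no computation. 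What the paper's approach buys is economy: no auxiliary quantities are introduced and the inductive hypothesis is exactly the lemma. What yours buys is modularity and transparency: the strengthened invariant is preserved regardless of the order in which pushes execute within an iteration, so it covers both the asynchronous and synchronous phases uniformly (the paper has to append a remark that the threshold-$0$ case ``follows similarly''), and each inductive step is a one-line identity rather than a multi-line cancellation.
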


\begin{proof}
We prove this lemma using mathematical induction. First, we consider the case where the residue threshold is $\epsilon_f$. Initially, we have $r(u, u_i) = 0$ for all $u_i \in U$, except for $r(u, u) = 1$, and $\hat{\pi}(u, u_i) = 0$ for each node $u_i \in U$. Therefore, we have $\pi(u, u_i) = r(u, u) \cdot \pi(u, u_i)$ for all $u_i \in U$, implying that Eq. \eqref{eq:fwd_equation} holds in the base case. Next, for the inductive case, we assume that after $\ell$ iterations, the approximate AHPP values $\hat{\pi}^{(\ell)}(u, u_i)$ for all $u_i \in U$, and the residue values $r^{(\ell)}(u, u_i)$ for each node $u_i \in U$, satisfy Eq. \eqref{eq:fwd_equation}, i.e., $\pi(u, u_i) = \hat{\pi}^{(\ell)}(u, u_i) + \sum_{u_j \in U}{r^{(\ell)}(u, u_j) \cdot \pi(u_j, u_i)}$.
Let $\hat{\pi}^{(\ell+1)}(u, u_i)$ for all $u_i \in U$ be the approximate AHPP values, and $r^{(\ell+1)}(u, u_i)$ be the residue values at the end of the $(\ell+1)$-th iteration. For any node $u_i\in U$, we define $\Delta(u_i)$:
\begin{align}\label{eq:delta_ui}
& \Delta(u_i) = \hat{\pi}^{(\ell+1)}(u, u_i) - \hat{\pi}^{(\ell)}(u, u_i) \nonumber\\
&\textstyle + \sum_{u_j \in U}{\left(r^{(\ell+1)}(u, u_j) - r^{(\ell)}(u, u_j)\right) \cdot \pi(u_j, u_i)}.
\end{align}
If $\Delta(u_i) = 0$ holds for each $u_i\in U$, the lemma is established. By Algorithm \ref{alg:ASRP}, for a node $u_i$ with $r^{(\ell)}(u, u_i) > \epsilon_f$, we have $\hat{\pi}^{(\ell+1)}(u, u_i) - \hat{\pi}^{(\ell)}(u, u_i) = \alpha \cdot r^{(\ell)}(u, u_i)$ and $r^{(\ell+1)}(u, u_i) - r^{(\ell)}(u, u_i) = -r^{(\ell)}(u, u_i) + \sum_{\substack{u_x \in U \\ r^{(\ell)}(u, u_x) > \epsilon_f}}{(1-\alpha) \cdot r^{(\ell)}(u, u_x) \cdot \mathbf{P}(u_x, u_i)}.$
Whereas for $u_i$ with $r^{(\ell)}(u, u_i) \le \epsilon_f$, we have $\hat{\pi}^{(\ell+1)}(u, u_i) - \hat{\pi}^{(\ell)}(u, u_i) = 0$ and
$r^{(\ell+1)}(u, u_i) - r^{(\ell)}(u, u_i)= \sum_{\substack{u_x \in U \\ r^{(\ell)}(u, u_x) > \epsilon_f}}{(1 - \alpha) \cdot r^{(\ell)}(u, u_x) \cdot \mathbf{P}(u_x, u_i)}$.
If $r^{(\ell)}(u, u_i) > \epsilon_f$, Eq. \eqref{eq:delta_ui} can be transformed as follows.
\begin{align*}
&\scriptstyle \Delta(u_i) = \alpha \cdot r^{(\ell)}(u, u_i) - \sum_{\substack{u_j \in U \\ r^{(\ell)}(u, u_j) > \epsilon_f}}{r^{(\ell)}(u, u_j) \cdot \pi(u_j, u_i)} \\
&\scriptstyle + \sum_{u_j \in U}{\left(\sum_{\substack{u_x \in U \\ r^{(\ell)}(u, u_x) > \epsilon_f}}{(1 - \alpha) \cdot r^{(\ell)}(u, u_x) \cdot \mathbf{P}(u_x, u_j)}\right) \cdot \pi(u_j, u_i)}.
\end{align*}
Simplifying, we find the following relationship.
\begin{align*}
&\scriptstyle \Delta(u_i) = \alpha \cdot r^{(\ell)}(u, u_i) - \sum_{\substack{u_j \in U \\ r^{(\ell)}(u, u_j) > \epsilon_f}}{r^{(\ell)}(u, u_j) \cdot \pi(u_j, u_i)} \\
&\scriptstyle + \sum_{\substack{u_x \in U \\ r^{(\ell)}(u, u_x) > \epsilon_f}}{\sum_{u_j \in U}{(1 - \alpha) \cdot r^{(\ell)}(u, u_x) \cdot \mathbf{P}(u_x, u_j) \cdot \pi(u_j, u_i)}} \\
&\scriptstyle = \alpha \cdot r^{(\ell)}(u, u_i) + \sum_{u_j \in U}{(1 - \alpha) \cdot r^{(\ell)}(u, u_i) \cdot \mathbf{P}(u_i, u_j) \cdot \pi(u_j, u_i)} \\
&\scriptstyle - r^{(\ell)}(u, u_i) \cdot \pi(u_i, u_i) - \sum_{\substack{u_j \in U \\ r^{(\ell)}(u, u_j) > \epsilon_f \\ u_j \ne u_i}}{r^{(\ell)}(u, u_j) \cdot \pi(u_j, u_i)} \\
&\scriptstyle + \sum_{\substack{u_x \in U \\ r^{(\ell)}(u, u_x) > \epsilon_f \\ u_x \ne u_i}}{\sum_{u_j \in U}{(1 - \alpha) \cdot r^{(\ell)}(u, u_x) \cdot \mathbf{P}(u_x, u_j) \cdot \pi(u_j, u_i)}} = 0.
\end{align*}
If $r^{(\ell)}(u, u_i) \le \epsilon_f$, we have
\begin{align*}
&\textstyle \Delta(u_i) = - \sum_{\substack{u_j \in U \\ r^{(\ell)}(u, u_j) > \epsilon_f}}{r^{(\ell)}(u, u_j) \cdot \pi(u_j, u_i)} \\
&\scriptstyle + \sum_{u_j \in U}{\left(\sum_{\substack{u_x \in U \\ r^{(\ell)}(u, u_x) > \epsilon_f}}{(1 - \alpha) \cdot r^{(\ell)}(u, u_x) \cdot \mathbf{P}(u_x, u_j)}\right) \cdot \pi(u_j, u_i)} \\
&\scriptstyle = \sum_{\substack{u_x \in U \\ r^{(\ell)}(u, u_x) > \epsilon_f}}{\sum_{u_j \in U}{(1 - \alpha) \cdot r^{(\ell)}(u, u_x) \cdot \mathbf{P}(u_x, u_j) \cdot \pi(u_j, u_i)}} \\
&\textstyle - \sum_{\substack{u_j \in U \\ r^{(\ell)}(u, u_j) > \epsilon_f}}{r^{(\ell)}(u, u_j) \cdot \pi(u_j, u_i)} = 0.
\end{align*}
Thus, the proof is established when the residue threshold is $\epsilon_f$. The proof for the case where the residue threshold is $0$ follows similarly and is omitted for brevity.
\end{proof}


\begin{theorem}[Correctness]\label{thm:ASRP_Correctness}
Given a source node $u$, \algot computes an approximate AHPP value $\hat{\pi}(u, u_i)$ for each node $u_i \in U$ with $\lambda \ge \max_{u_i \in U}{\pi(u, u_i)}$, ensuring that:
\begin{equation*}
    |\pi(u, u_i) - \hat{\pi}(u, u_i)| \le \epsilon.
\end{equation*}
\end{theorem}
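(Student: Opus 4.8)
The plan is to read off the error bound from the iterative invariance (Lemma~\ref{lem:fwd_equation}) evaluated at the instant \algot returns, combined with the algorithm's termination guard. Note first that \algot is deterministic, so the bound we are after is deterministic as well.

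First I would pin down the state of the algorithm at return. Both \texttt{while} loops in Algorithm~\ref{alg:ASRP} share the guard ``$\exists u_i\in U$ with $r(u,u_i)>\epsilon_f$'', and every iteration of either phase flushes all residue held by nodes in $V\cup\mathcal{A}$ back into $U$. Hence, whether the asynchronous phase finishes on its own or hits the \textbf{break} at Lines~25--26 and the synchronous phase takes over, at the moment \algot returns we have $r(u,u_i)\le\epsilon_f$ for every $u_i\in U$ and $r(u,x)=0$ for every $x\in V\cup\mathcal{A}$. I would also record a side fact: every residue update in Algorithms~\ref{alg:APP}--\ref{alg:ASRP} adds a non-negative amount, so residues stay non-negative throughout; with Lemma~\ref{lem:fwd_equation} this gives $\hat{\pi}(u,u_i)\le\pi(u,u_i)$, i.e.\ the error never changes sign. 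A routine point needed for ``at return'' to be well defined is that the synchronous phase does terminate: one synchronous iteration sends the $U$-residue vector $\mathbf{r}_U$ to $(1-\alpha)\,\mathbf{r}_U\mathbf{P}$, so after $O(\log(1/\epsilon_f))$ iterations all residues drop below $\epsilon_f$ --- this is exactly the computation already performed in the proof of Theorem~\ref{thm:ASRP_Time}.

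Then I would invoke Lemma~\ref{lem:fwd_equation} at return, obtaining
\[
0 \;\le\; \pi(u,u_i)-\hat{\pi}(u,u_i) \;=\; \sum_{u_j\in U} r(u,u_j)\,\pi(u_j,u_i)\;\le\;\epsilon_f\sum_{u_j\in U}\pi(u_j,u_i),
\]
where the last step uses $r(u,u_j)\le\epsilon_f$. It remains to control $\sum_{u_j\in U}\pi(u_j,u_i)$. For this I would use the strong-Markov (first-passage) decomposition of the $\alpha$-attribute-augmented hidden walk of Lemma~\ref{lem:compute_A-RWR}: writing $h(a,b)\in[0,1]$ for the probability that the walk started at $a$ visits $b$ before it stops, the restart property gives $\pi(a,b)=h(a,b)\,\pi(b,b)$, hence in particular $\pi(u_j,u_i)\le\pi(u_i,u_i)$ and $\pi(u,u_i)=h(u,u_i)\,\pi(u_i,u_i)$; combined with the row-stochasticity of $\mathbf{P}=(1-\beta)\mathbf{P_S}+\beta\mathbf{P_A}$ (so that $\sum_{u_i\in U}\pi(u_j,u_i)=1$), this lets me replace $\sum_{u_j}\pi(u_j,u_i)$ by a $u_i$-sensitive quantity that is at most $\lambda$ under the hypothesis $\lambda\ge\max_{u_i\in U}\pi(u,u_i)$. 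Substituting $\epsilon_f=\epsilon/\lambda$ then collapses the chain to $\pi(u,u_i)-\hat{\pi}(u,u_i)\le\epsilon$, which is the claim.

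The first two steps are bookkeeping; the real difficulty is the last one. Converting a \emph{flat} residue threshold $\epsilon_f$ into a \emph{uniform} $\epsilon$ absolute-error guarantee is delicate precisely because the crude estimate $\sum_{u_j}\pi(u_j,u_i)\le|U|$ is far too weak, so the argument must exploit the walk's structure --- the first-passage factorization $\pi(a,b)=h(a,b)\pi(b,b)$, the stochasticity of $\mathbf{P}$, and the role of $\lambda$ as an upper bound linked to $\max_{u_i}\pi(u,u_i)$ --- to obtain a bound that the choice $\epsilon_f=\epsilon/\lambda$ exactly absorbs. I expect this to be the main obstacle, and it is the step I would write out in full detail.
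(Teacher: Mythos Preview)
Your first two steps are fine and match the paper: at return every $u_j\in U$ has $r(u,u_j)\le\epsilon_f$, residues are non-negative, and Lemma~\ref{lem:fwd_equation} gives
\[
0\;\le\;\pi(u,u_i)-\hat\pi(u,u_i)\;=\;\sum_{u_j\in U}r(u,u_j)\,\pi(u_j,u_i)\;\le\;\epsilon_f\sum_{u_j\in U}\pi(u_j,u_i).
\]
The gap is precisely the step you flag as the obstacle, and it cannot be closed with the tools you propose. The first-passage identity $\pi(a,b)=h(a,b)\pi(b,b)$ and row-stochasticity of $\mathbf{P}$ control, respectively, individual entries and \emph{row} sums of $\pi$; neither bounds the \emph{column} sum $\sum_{u_j}\pi(u_j,u_i)$ in terms of the single-row quantity $\max_{u_i}\pi(u,u_i)$. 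In fact the hypothesis as stated is too weak for the conclusion: take $V=\{v\}$ with every $u_j\in U$ adjacent to $v$ (unit weights, $\beta=0$), so $\mathbf{P_S}[u_j,u_k]=1/|U|$ for all $j,k$. Then $\pi(u,u)=\alpha+(1-\alpha)/|U|$ is the largest entry in row $u$, so $\lambda=\alpha+(1-\alpha)/|U|$ is admissible and $\epsilon_f=\epsilon/\lambda\approx\epsilon/\alpha$ for large $|U|$; after a synchronous run all residues equal $(1-\alpha)^\ell/|U|\in((1-\alpha)\epsilon_f,\epsilon_f]$ at termination, and since column sums here equal $1$ the error is that common residue, hence strictly larger than $(1-\alpha)\epsilon_f\approx(1-\alpha)\epsilon/\alpha\gg\epsilon$ for the default $\alpha=0.15$. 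So no argument can turn the stated hypothesis into the stated bound.

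What the paper's proof actually invokes at this point is the inequality $\sum_{u_j\in U}\pi(u_j,u_i)\le\lambda$, i.e.\ a column-sum bound, after which the claim is immediate: $\epsilon_f\sum_{u_j}\pi(u_j,u_i)\le\epsilon_f\lambda=\epsilon$. That column-sum bound is exactly what the subsequent $\lambda$-estimation (Theorem~\ref{thm:lambda_estimation}) delivers, so the hypothesis written in Theorem~\ref{thm:ASRP_Correctness} is almost certainly a typo for $\lambda\ge\max_{u_i\in U}\sum_{u_j\in U}\pi(u_j,u_i)$. Under that corrected hypothesis your ``main obstacle'' evaporates into one line, and the rest of your write-up already constitutes the proof.
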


\begin{proof}
At the termination of Algorithm \ref{alg:ASRP}, $ r(u, u_i) \le \epsilon_f$ holds for every node $u_i \in U$. By applying Lemma \ref{lem:fwd_equation}, we have $\pi(u, u_i) - \hat{\pi}(u, u_i) \textstyle= \sum_{u_j \in U}{r(u, u_j) \cdot \pi(u_j, u_i)}  \le \epsilon_f \cdot \sum_{u_j \in U}{\pi(u_j, u_i)}  \le \epsilon_f \cdot \lambda = \epsilon$.
Besides, from Lemma \ref{lem:fwd_equation}, we also have $\pi(u, u_i) - \hat{\pi}(u, u_i) = \sum_{u_j \in U}{r(u, u_j) \cdot \pi(u_j, u_i)} \ge 0$.
Combining the two inequalities, we obtain $|\pi(u, u_i) - \hat{\pi}(u, u_i)| \le \epsilon,$ which completes the proof.
\end{proof}


\stitle{The estimation of $\lambda$.} By Theorem \ref{thm:ASRP_Correctness}, we know that the input parameter $\lambda$ of Algorithm \ref{alg:ASRP} must be greater than $\max_{u_i \in U} \pi(u, u_i)$. However, $\pi(u, u_i)$ is the output of Algorithm \ref{alg:ASRP}. Thus, at first glance, it seems very difficult for us to set $\lambda$, but the following Theorem \ref{thm:lambda_estimation}  shows that we can effectively estimate $\lambda$ during the pre-processing phase through Equation \eqref{eq:lambda_estimation}. Moreover, Theorem \ref{thm:lambda_estimation} guarantees that this estimate constitutes a tight upper bound for $\lambda$.

\begin{theorem} \label{thm:lambda_estimation}
Let $\boldsymbol{\pi}_T$ denote the result of PI initialized with the all-ones vector $\boldsymbol{1}$ after $T$ iterations. Then the following inequality holds:
$$
\lambda \geq \max_{u_i \in U} \sum_{u_j \in U} \pi(u_j, u_i),
$$
where $\lambda$ is defined as:
\begin{equation} \label{eq:lambda_estimation}
\lambda = \max_{u_i \in U}{\boldsymbol{\pi}_T(u_i)} + |U| \cdot (1 - \alpha)^T.
\end{equation}
\end{theorem}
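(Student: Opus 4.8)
The plan is to recast the target quantity in matrix form and then control the truncation error of power iteration. Let $\boldsymbol{\Pi}$ be the $|U|\times|U|$ matrix with $\boldsymbol{\Pi}[u_j,u_i]=\pi(u_j,u_i)$; by Definition~\ref{def:A-RWR} we have $\boldsymbol{\Pi}=\sum_{\ell\ge 0}\alpha(1-\alpha)^\ell\mathbf{P}^\ell$. Writing $\boldsymbol{1}$ for the all-ones row vector, the inner sum is exactly $\sum_{u_j\in U}\pi(u_j,u_i)=(\boldsymbol{1}\boldsymbol{\Pi})[u_i]$, so the claim reduces to showing $(\boldsymbol{1}\boldsymbol{\Pi})[u_i]\le\lambda$ for every $u_i\in U$. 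Here $\boldsymbol{\pi}_T$ is the $T$-th iterate of the recursion $\mathbf{x}\gets(1-\alpha)\mathbf{x}\mathbf{P}+\alpha\boldsymbol{1}$ started from $\boldsymbol{1}$ (i.e.\ Algorithm~\ref{alg:PI} with $\mathbf{e}_u$ replaced by $\boldsymbol{1}$), which by a routine unrolling has the closed form $\boldsymbol{\pi}_T=(1-\alpha)^T\boldsymbol{1}\mathbf{P}^T+\sum_{\ell=0}^{T-1}\alpha(1-\alpha)^\ell\boldsymbol{1}\mathbf{P}^\ell$.

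First I would establish that $\mathbf{P}$ is row-stochastic. A direct computation on Eq.~\eqref{eq:compute_hidden_transition} gives $\sum_{u_j}\mathbf{P_S}[u_i,u_j]=\sum_{v\in N(u_i)}\tfrac{w(u_i,v)}{d(u_i)}\sum_{u_j\in N(v)}\tfrac{w(v,u_j)}{d(v)}=1$, and the analogous telescoping over attributes in Eq.~\eqref{eq:compute_attributed_transition} gives $\sum_{u_j}\mathbf{P_A}[u_i,u_j]=1$; hence $\mathbf{P}=(1-\beta)\mathbf{P_S}+\beta\mathbf{P_A}$ is row-stochastic, and so is $\mathbf{P}^\ell$ for every $\ell$. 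Consequently $\mathbf{P}^\ell\boldsymbol{1}^{\top}=\boldsymbol{1}^{\top}$, so $\boldsymbol{1}\mathbf{P}^\ell\boldsymbol{1}^{\top}=|U|$; since all entries of $\boldsymbol{1}\mathbf{P}^\ell$ are nonnegative, this yields the crude but sufficient bound $\|\boldsymbol{1}\mathbf{P}^\ell\|_\infty\le|U|$ for all $\ell$.

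Next I would compare $\boldsymbol{\pi}_T$ with the full series. Subtracting the closed form of $\boldsymbol{\pi}_T$ from $\boldsymbol{1}\boldsymbol{\Pi}=\sum_{\ell\ge 0}\alpha(1-\alpha)^\ell\boldsymbol{1}\mathbf{P}^\ell$ gives the exact identity
\[
\boldsymbol{1}\boldsymbol{\Pi}=\boldsymbol{\pi}_T-(1-\alpha)^T\boldsymbol{1}\mathbf{P}^T+\textstyle\sum_{\ell\ge T}\alpha(1-\alpha)^\ell\boldsymbol{1}\mathbf{P}^\ell .
\]
Every vector on the right-hand side has nonnegative entries, so dropping the negative middle term only increases the bound; applying $\|\boldsymbol{1}\mathbf{P}^\ell\|_\infty\le|U|$ to the tail and using $\sum_{\ell\ge T}\alpha(1-\alpha)^\ell=(1-\alpha)^T$ gives, for each $u_i$,
\[
(\boldsymbol{1}\boldsymbol{\Pi})[u_i]\le\boldsymbol{\pi}_T(u_i)+|U|(1-\alpha)^T\le\max_{u_k\in U}\boldsymbol{\pi}_T(u_k)+|U|(1-\alpha)^T=\lambda .
\]
Taking the maximum over $u_i$ proves the inequality. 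I would then close by noting that the bound is asymptotically tight — as $T\to\infty$ the term $|U|(1-\alpha)^T\to 0$ and $\boldsymbol{\pi}_T\to\boldsymbol{1}\boldsymbol{\Pi}$ — and that, since $\pi(u,u_i)$ is one of the nonnegative summands of $\sum_{u_j}\pi(u_j,u_i)$, the theorem also certifies $\lambda\ge\max_{u_i}\pi(u,u_i)$, exactly the precondition required by Theorem~\ref{thm:ASRP_Correctness}.

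There is no deep obstacle: the argument is careful bookkeeping of a truncated Neumann series. The one point to handle with care is the ``overshoot'' term $(1-\alpha)^T\boldsymbol{1}\mathbf{P}^T$, which must be retained in the exact identity and discarded only when an upper bound is wanted; and one must resist replacing $\|\boldsymbol{1}\mathbf{P}^\ell\|_\infty$ by $1$, since column sums of a row-stochastic matrix can exceed $1$ — this is precisely why the correction term scales with $|U|$ rather than being absorbed into the geometric tail.
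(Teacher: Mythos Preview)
Your proof is correct and follows essentially the same route as the paper: write $\sum_{u_j}\pi(u_j,u_i)$ as the $u_i$-th entry of $\boldsymbol{1}\sum_{\ell\ge 0}\alpha(1-\alpha)^\ell\mathbf{P}^\ell$, identify the first $T$ terms with $\boldsymbol{\pi}_T$, and bound the tail by $|U|(1-\alpha)^T$. Two minor differences are worth noting. First, you take the theorem statement literally and unroll PI from the initial vector $\boldsymbol{1}$, obtaining the extra term $(1-\alpha)^T\boldsymbol{1}\mathbf{P}^T$, which you then (correctly) discard when passing to an upper bound; the paper's proof instead uses the closed form $\boldsymbol{\pi}_T=\boldsymbol{1}\sum_{\ell=0}^{T-1}\alpha(1-\alpha)^\ell\mathbf{P}^\ell$, i.e.\ effectively the zero-initialized iterate, which sidesteps that overshoot term entirely. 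Second, you bound $(\boldsymbol{1}\mathbf{P}^\ell)[u_i]\le|U|$ via row-stochasticity and the identity $\boldsymbol{1}\mathbf{P}^\ell\boldsymbol{1}^\top=|U|$, whereas the paper simply invokes that each entry of $\mathbf{P}^\ell$ is at most~$1$; both justifications are valid and yield the same bound.
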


\begin{proof}
Following the PI process initialized with the all-ones vector $\boldsymbol{1}$ and run for $T$ iterations, we obtain:
\begin{align*}
\boldsymbol{\pi}_T = \boldsymbol{1} \cdot \sum_{\ell=0}^{T-1} \alpha(1-\alpha)^{\ell} \mathbf{P}^{\ell}.
\end{align*}

For any node $u_i \in U$, the cumulative similarity satisfies:
\begin{align*}
\sum_{u_j \in U} \pi(u_j, u_i) &= \boldsymbol{\pi}_T(u_i) + \left(\boldsymbol{1} \cdot \sum_{\ell=T}^{\infty} \alpha(1-\alpha)^{\ell} \mathbf{P}^{\ell}\right)(u_i) \\
&\leq \boldsymbol{\pi}_T(u_i) + |U| \cdot \left(\sum_{\ell=T}^{\infty} \alpha(1-\alpha)^{\ell}\right) \\
&= \boldsymbol{\pi}_T(u_i) + |U| \cdot (1-\alpha)^T \\
&\leq \max_{u_i \in U} \boldsymbol{\pi}_T(u_i) + |U| \cdot (1-\alpha)^T,
\end{align*}
where the first inequality follows from the fact that each entry of $\mathbf{P}^{\ell}$ is bounded by 1. This completes the proof.
\end{proof}

The PI used in Theorem \ref{thm:lambda_estimation} is a variant of the original PI (Section \ref{subsec:pi}), enhanced with synchronous push operations. This adaptation, as demonstrated empirically, optimizes the trade-off between time efficiency and solution quality.

\section{Experimental Evaluation}\label{sec:experiments}

   

\begin{table}[t!]
	\centering
	\caption{\textcolor{blue}{Dataset Statistics. ($K = 10^3$, $M = 10^6$)}} 
	\scalebox{1}{
		\begin{tabular}{c|c|c|c|c|c}
			\toprule
			\textbf{Dataset} & $\boldsymbol{|U|}$ & $\boldsymbol{|V|}$ & $\boldsymbol{|E|}$ & $\boldsymbol{|\mathcal{A}|}$ & $\boldsymbol{|E_{\mathcal{A}}|}$\\
			\toprule
            \textcolor{blue}{Cora \cite{yang2024effective}} & 1.3K & 0.7K & 2.3K & 1.4K & 23.8K\\
			Citeseer \cite{yang2024effective} & 1.2K & 0.7K & 1.6K & 3.7K & 38.8K\\
            \textcolor{blue}{BlogCatalog \cite{DBLP:conf/wsdm/MengLBZ19}} & 3.4K & 1.7K & 83.0K & 8.1K & 245.2K\\
			Flickr \cite{yang2021effective} & 3.7K & 3.2K & 93.0K & 12.0K & 93.7K\\
            MovieLens  \cite{yang2024effective} & 6.0K & 3.8K & 1.0M & 0.03K & 18.1K\\ 
			LastFM \cite{yang2022scalable} & 7.6K & 7.8K & 3.0M & 7.6K & 55.6K\\
            Google \cite{yang2024effective} & 64.5K & 800.0K & 1.4M & 1.0K & 9.6M\\
			Amazon \cite{yang2024effective} & 2.3M & 8.0M & 22.5M & 0.8K & 24.2M\\
			\bottomrule		
	\end{tabular}}
	\label{tab:data}
\end{table}

In this section, we conduct comprehensive experiments to evaluate the effectiveness and efficiency of our solutions. All algorithms are implemented in C++ and compiled with g++ 11.4.0 using -O3 optimization. All experiments are conducted on a Linux machine with an Intel Xeon(R) Silver 4210 @2.20GHz CPU and 1 TB RAM.



\vspace{-0.2cm}
\subsection{Experimental Setup}
\stitle{Datasets.} \textcolor{blue}{We evaluate our methods on eight widely used attributed bipartite graph benchmarks for similarity search (Table~\ref{tab:data}) \cite{huang2020biane, yang2021effective, yang2024effective, DBLP:conf/wsdm/MengLBZ19, yang2022scalable, yang2022efficient}. Cora and CiteSeer are citation networks with publications as nodes and citations as edges. BlogCatalog is a social network of bloggers, where node attributes are derived from user-provided keywords describing blogs. Flickr models user--tag interactions, while LastFM represents user--artist preferences. MovieLens is a user-movie network in which edges correspond to user ratings. The Google and Amazon datasets capture user-generated reviews of restaurants and books, respectively. All datasets are equipped with ground-truth cluster labels, following the experimental protocols adopted in prior work \cite{huang2020biane, yang2021effective, yang2024effective, DBLP:conf/wsdm/MengLBZ19, yang2022scalable, yang2022efficient}.}


\begin{figure*}[t!]
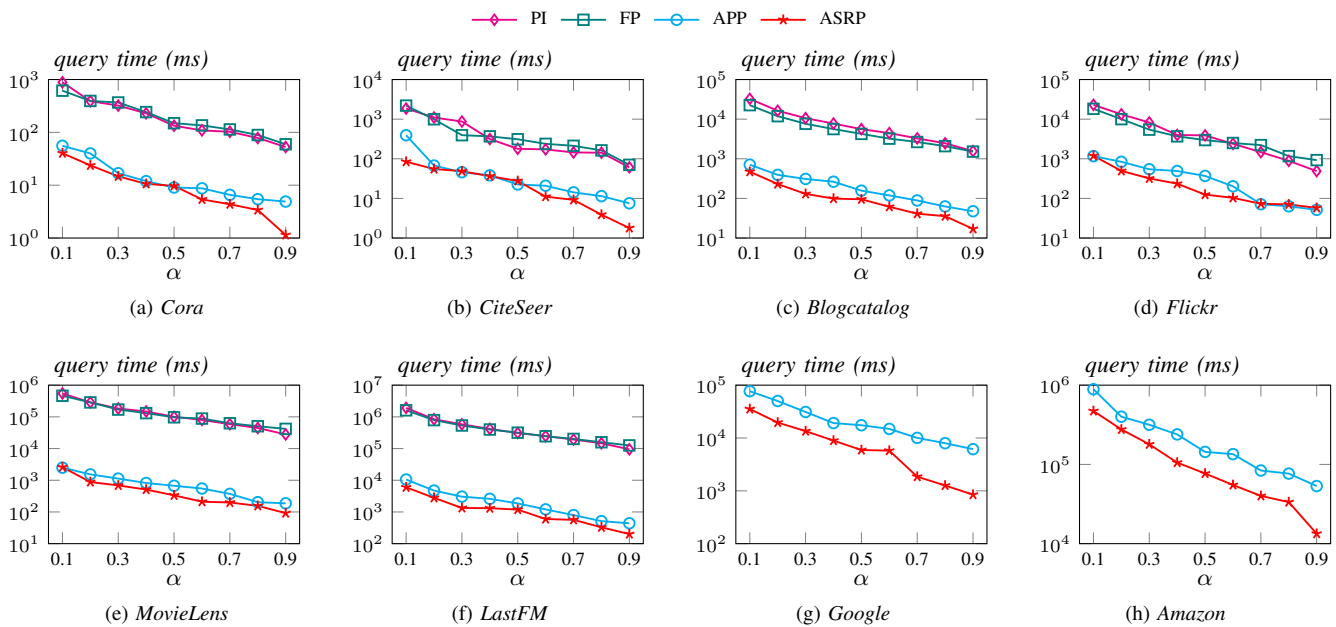

\centering
\begin{small}
\hspace{0mm}%
}%
\end{small}
\caption{\textcolor{blue}{Running time of different methods with varying $\alpha$ (MC is excluded due to timeouts on all datasets).}}
\label{fig:time_alpha}
\vspace{-3mm}
\end{figure*}

\stitle{Baselines.} \textcolor{blue}{We compare our approach with three categories of representative methods. (1) Unipartite graph similarity search methods: Pearson \cite{huang2024zhenmei}, Jaccard \cite{li2024improved}, SimRank \cite{wang2021exactsim}, PPR \cite{DBLP:journals/pacmmod/ZhouLLLW25}, and RW\_Uniform \cite{yin2010unified}. (2) Bipartite graph similarity search methods: AnchorGNN \cite{wu2023billion}, GEBEp \cite{yang2022scalable}, IDBR \cite{DBLP:conf/sigir/LiWLLJD24}, BiNE \cite{DBLP:journals/tkde/GaoHCLZZ22}, HPP \cite{yang2022efficient}, and BHPP \cite{yang2022efficient}. (3) Attributed bipartite graph similarity search methods: BiANE \cite{huang2020biane}, together with MC, PI, FP, APP, and ASRP, which are different algorithmic solutions to the proposed AHPP problem (Sections~\ref{subsec:mc}--\ref{subsec:ASRP}).}


\stitle{Parameters.} The proposed AHPP model involves two parameters: the restart probability $\alpha$ and the attribute jumping probability $\beta$. Unless otherwise specified,  we set $\alpha=0.15$ and $\beta=0.35$ by default. For a fair comparison, the parameters for other baselines are set according to the recommendations in their respective papers. For reliability, we randomly select 100 query nodes and report the average running time and quality.


\subsection{Efficiency Evaluation} \label{subsec:efficiency}

\textcolor{blue}{This subsection compares the running times of AHPP algorithms (MC, PI, FP, APP, and ASRP) only. Other methods are excluded because their objectives differ, making efficiency comparisons meaningless. For brevity, we also omit methods whose average query time exceeds one week.}


\begin{figure*}[t!]
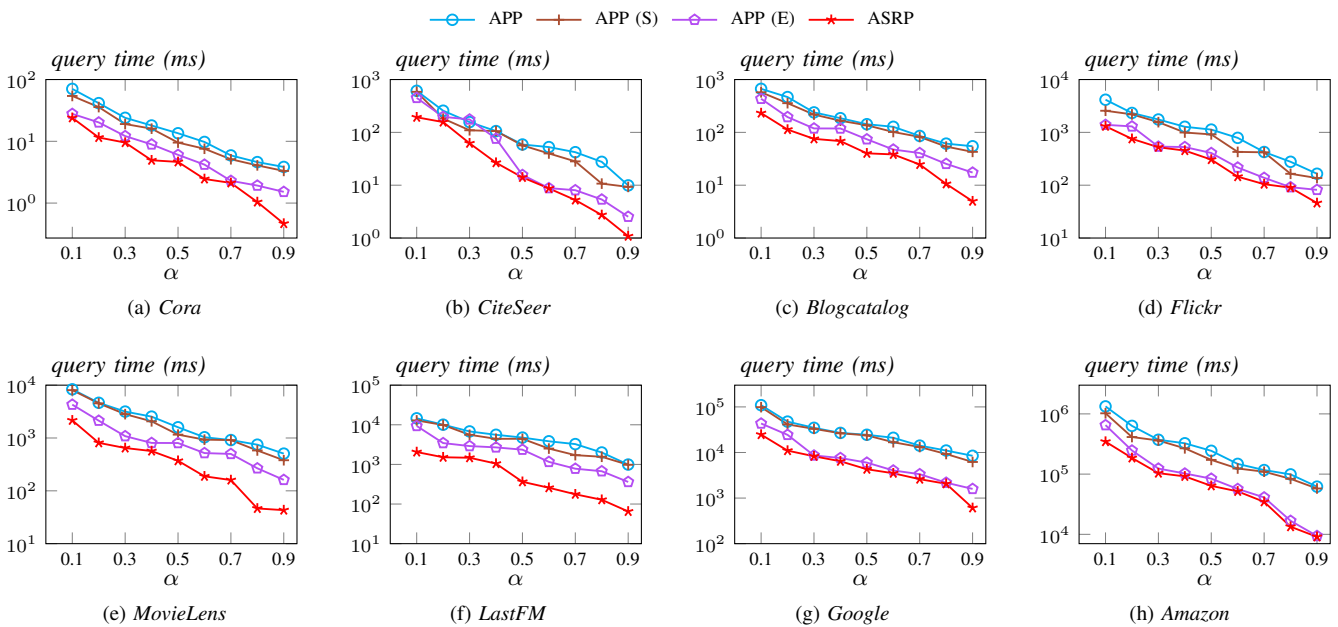

\centering
\begin{small}

}%
\end{small}
\caption{\textcolor{blue}{The performance of different optimizers of ASRP with varying $\alpha$.}}
\label{fig:opt_alpha}
\vspace{-3mm}
\end{figure*}

\stitle{Exp-1: Running time of different methods with varying parameters.} \textcolor{blue}{For MC, we set \( p_f = 10^{-6} \) following previous works \cite{wang2017fora, liao2022efficient, liao2023efficient}. For ASRP, we set \( \lambda = \max_{u_i \in U}{\pi(u, u_i)} \), as stated in Theorem \ref{thm:ASRP_Correctness}. Additionally, there are three parameters \( \epsilon \), \( \alpha \), and \( \beta \) are involved in MC, PI, FP, APP, and ASRP (details in  Sections \ref{sec:existing} and \ref{sec:proposed_algorithm}). Since \( \beta \) affects only query quality and not running time, we omit the results for different values of \( \beta \) for brevity. For the parameter \(\epsilon\), by Figure \ref{fig:time_epsilon}, we have the following observations: (1) The proposed APP and ASRP are at least an order of magnitude faster than existing methods in most cases. Notably, ASRP is at least two orders of magnitude faster than the three baselines (i.e., MC, PI, FP) across all datasets for \(10^{-3} \leq \epsilon \leq 10^{-6}\), highlighting the high efficiency of  ASRP, which is consistent with the theoretical analysis in  Section \ref{subsec:ASRP}. (2) The MC approach struggles with high-precision queries (e.g., \(\epsilon \in \{10^{-5}, 10^{-6}\}\)) and is inefficient in such cases, while PI and FP fail to scale effectively on large graphs (e.g., Google and Amazon). (3) The running time of all methods increases as precision increases (smaller \(\epsilon\)), as more iterations are required.  In particular, MC  experiences a significant increase, as higher precision necessitates more random walks. For the parameter $\alpha$,  we fix $\epsilon=10^{-6}$. By Figure \ref{fig:time_alpha}, we observe the following: (1) The proposed APP and ASRP exhibit the highest efficiency on all datasets, frequently outperforming baseline methods by 1$\sim$2 orders of magnitude. In particular, compared to the two competitors, PI and FP, ASRP achieves a speedup of at least 319\(\times\) on MovieLens, LastFM, Google, and Amazon, and over 22\(\times\) on Cora, CiteSeer, BlogCatalog, and Flickr. (2) PI and FP have high computational costs on large-scale graphs, such as Google and Amazon. (3) The computation time for all methods decreases as \(\alpha\) increases, as larger \(\alpha\) values lead to fewer iterations, which aligns with our theoretical analysis (details in  Sections \ref{sec:existing} and \ref{sec:proposed_algorithm}). In short, these results provide evidence that the proposed APP and ASRP are highly efficient when contrasted with baselines.}

\begin{figure}[t!]
\centering
\begin{small}
\begin{tikzpicture}
    \begin{customlegend}[legend columns=2,
        legend entries={APP, ASRP},
        legend style={at={(0.45,1.05)},anchor=north,draw=none,font=\scriptsize,column sep=0.1cm}]
    \addlegendimage{line width=0.25mm,mark=o,color=first-color}
    \addlegendimage{line width=0.25mm,mark=star,color=second-color}
    \end{customlegend}
\end{tikzpicture}
\\[-\lineskip]
\vspace{-4mm}
\subfloat[{\em Varying ($|U|$ + $|V|$ + $|\mathcal{A}|$)}] {
\hspace{-5mm}
\begin{tikzpicture}[scale=1]
    \begin{axis}[
        height=\columnwidth/2.4,
        width=\columnwidth/1.9,
        ylabel={\em Running time (ms)},
        xmin=0.5, xmax=6.5,
        ymin=60000, ymax=1000000,
        xtick={1,2,3,4,5,6},
        xticklabel style = {font=\scriptsize},
        yticklabel style = {font=\scriptsize},
        xticklabels={50\%, 60\%, 70\%, 80\%, 90\%, 100\%},
        scaled y ticks = false,
        ymode=log,
        log basis y={10},
        ytick = {10000, 100000, 1000000},
        every axis y label/.style={at={(current axis.north west)},right=13mm,above=0mm},
    ]
    \addplot[line width=0.25mm,mark=o,color=first-color] 
            plot coordinates {
(1, 125115)
(2, 169069)
(3, 205656)
(4, 232806)
(5, 268066)
(6, 314415)
    };

    \addplot[line width=0.25mm,mark=star,color=second-color] 
    plot coordinates {
(1, 74160)
(2, 103953)
(3, 121266)
(4, 134153)
(5, 158562)
(6, 158896)
    };
    \end{axis}
\end{tikzpicture}\hspace{0mm}%
}
\subfloat[{\em Varying ($|E|$ + $|E_\mathcal{A}|$)}] {
\begin{tikzpicture}[scale=1]
    \begin{axis}[
        height=\columnwidth/2.4,
        width=\columnwidth/1.9,
        ylabel={\em Running time (ms)},
        xmin=0.5, xmax=6.5,
        ymin=85000, ymax=1000000,
        xtick={1,2,3,4,5,6},
        xticklabel style = {font=\scriptsize},
        yticklabel style = {font=\scriptsize},
        xticklabels={50\%, 60\%, 70\%, 80\%, 90\%, 100\%},
        scaled y ticks = false,
        ymode=log,
        log basis y={10},
        ytick = {100000, 1000000},
        every axis y label/.style={at={(current axis.north west)},right=13mm,above=0mm},
    ]
    \addplot[line width=0.25mm,mark=o,color=first-color] 
            plot coordinates {
(1, 183302)
(2, 219053)
(3, 249119)
(4, 270978)
(5, 275994)
(6, 296619)
    };

    \addplot[line width=0.25mm,mark=star,color=second-color] 
    plot coordinates {
(1, 104245)
(2, 119450)
(3, 141123)
(4, 147349)
(5, 155562)
(6, 188102)
    };
    \end{axis}
\end{tikzpicture}\hspace{0mm}%
}
\vspace{0mm}
\end{small}
\caption{Scalability testing on synthetic graphs.}
\label{fig:sca}
\end{figure}
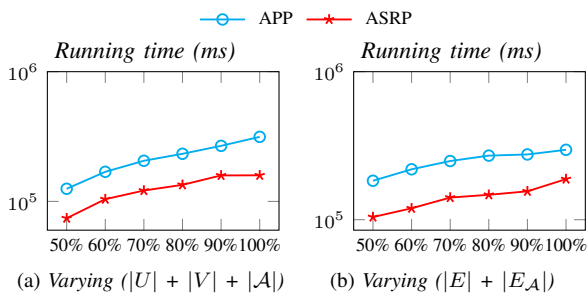

\stitle{Exp-2: The performance of different optimizers of ASRP with varying parameters}. \textcolor{blue}{In this experiment, we assess the impact of two optimizations (introduced in Section \ref{subsec:ASRP}) on the \algo algorithm under varying values of $\epsilon$ and $\alpha$. For the sake of clarity, we denote the \algo algorithm with synchronous push operations and a more stringent error threshold as \algo (S) and \algo (E), respectively. The parameter settings are identical to those in Exp-1. The results consistently indicate a reduction in runtime with the addition of each optimization. Specifically, based on Figures \ref{fig:opt_epsilon} and \ref{fig:opt_alpha}, the following observations can be made. (1) Both \algo (S) and \algo (E) outperform \algo in all datasets, highlighting the efficacy of the proposed optimizations. (2) The fully optimized algorithm (i.e., \algot) demonstrates the highest efficiency in all datasets, suggesting that both optimizations are effective. (3) The \algo (E) algorithm is faster than the \algo (S) algorithm in all instances, implying that the optimization with a stringent error threshold might be more effective than the one with synchronous push operations. In conclusion, the fully optimized version \algot yields the best results, and these findings emphasize the effectiveness of the proposed optimizations.}


\stitle{Exp-3: Scalability testing on synthetic graphs.} \textcolor{blue}{To further evaluate the scalability of our proposed solutions, we construct a comprehensive set of synthetic graphs by sampling nodes or edges together with their attributes from the Amazon dataset, which contains the largest numbers of nodes and edges among all datasets used in our experiments. Specifically, the sampling ratio is varied from 50\% to 100\%. Figure~\ref{fig:sca} reports the experimental results on these synthetic graphs with $\epsilon = 10^{-6}$, and similar performance trends are observed for other choices of $\epsilon$. The results of MC, PI, and FP are omitted, as these methods fail to terminate within the predefined time limit of one week. This observation is further supported by the results shown in Figures~\ref{fig:time_epsilon} and~\ref{fig:time_alpha}. As illustrated, the proposed APP and ASRP algorithms exhibit near-linear scalability with respect to the size of the synthetic graphs. Moreover, ASRP consistently outperforms APP in terms of running time. These findings demonstrate that our algorithms are capable of handling large-scale attributed bipartite graphs, which is consistent with our theoretical analysis presented in Section~\ref{sec:proposed_algorithm}.}



\subsection{Effectiveness Evaluation}
We use the well-known clustering consistency validation \cite{yang2021effective, yang2024effective}, top-k precision \cite{wang2020personalized, wang2017fora}, and link prediction \cite{huang2020biane, DBLP:conf/sigir/LiWLLJD24} to evaluate the effectiveness of the proposed solutions. Detailed results are reported as follows.


\begin{figure*}[t!]
\centering
\begin{small}
\begin{tikzpicture}
    \begin{customlegend}[
        legend entries={AHPP, BHPP, HPP, PPR, Simrank, Pearson},
        legend columns=6,
        area legend,
        legend style={at={(0.5,1.15)},anchor=north,draw=none,font=\small,column sep=0.3cm}]
        \addlegendimage{preaction={fill, AHPP-color}, pattern=north east lines} 
        \addlegendimage{preaction={fill, BHPP-color}, pattern=dots} 
        \addlegendimage{preaction={fill, HPP-color}, pattern=crosshatch dots} 
        \addlegendimage{preaction={fill, PPR-color}, pattern=grid} 
        \addlegendimage{preaction={fill, Simrank-color}, pattern=crosshatch} 
        \addlegendimage{preaction={fill, Pearson-color}, pattern=north west lines} 
    \end{customlegend}
\end{tikzpicture}
\begin{tikzpicture}
    \begin{customlegend}[
        legend entries={Jaccard, AnchorGNN, GEBEp, BiANE, RW\_Uniform, IDBR, BiNE},
        legend columns=7,
        area legend,
        legend style={at={(0.5,1.15)},anchor=north,draw=none,font=\small,column sep=0.15cm}]
        \addlegendimage{preaction={fill, Jaccard-color}, pattern=checkerboard} 
        \addlegendimage{preaction={fill, AnchorGNN-color}, pattern=horizontal lines} 
        \addlegendimage{preaction={fill, GEBEp-color}, pattern=vertical lines} 
        \addlegendimage{preaction={fill, BiANE-color}, pattern=bricks} 
        \addlegendimage{preaction={fill, RW_Uniform-color}, pattern=north east lines, postaction={pattern=dots}} 
        \addlegendimage{preaction={fill, IDBR-color}, pattern=grid, postaction={pattern=dots}} 
        \addlegendimage{preaction={fill, BiNE-color}, pattern=horizontal lines, postaction={pattern=vertical lines}} 
    \end{customlegend}
\end{tikzpicture}
\\[-\lineskip]
\vspace{-2mm}
\subfloat[{\em Cora}]{
\hspace{-7mm}
\begin{tikzpicture}[scale=1]
\begin{axis}[
        height=\columnwidth/2.3,
        width=\columnwidth/1.6,
    xtick=\empty,
    ybar=1.6pt,
    bar width=0.24cm,
    enlarge x limits=true,
    ylabel={\em F1-score},
    xticklabel=\empty,
    yticklabel style = {font=\scriptsize},
    ymin=0,
    ymax=0.65,
    every axis y label/.style={at={(current axis.north west)},right=7mm,above=0mm},
    legend style={at={(0.02,0.98)},anchor=north west,cells={anchor=west},font=\tiny}
    ]

\addplot [preaction={fill, AHPP-color}, pattern=north east lines] coordinates {(1, 0.58883149) }; 
\addplot [preaction={fill, BHPP-color}, pattern=dots] coordinates {(1, 0.56786829) }; 
\addplot [preaction={fill, HPP-color}, pattern=crosshatch dots] coordinates {(1, 0.56535561) }; 
\addplot [preaction={fill, PPR-color}, pattern=grid] coordinates {(1, 0.51489750) }; 
\addplot [preaction={fill, Simrank-color}, pattern=crosshatch] coordinates {(1, 0.55294288) }; 
\addplot [preaction={fill, Pearson-color}, pattern=north west lines] coordinates {(1, 0.18290441) }; 
\addplot [preaction={fill, Jaccard-color}, pattern=checkerboard] coordinates {(1, 0.32407927) }; 
\addplot [preaction={fill, AnchorGNN-color}, pattern=horizontal lines] coordinates {(1, 0.16141442) }; 
\addplot [preaction={fill, GEBEp-color}, pattern=vertical lines] coordinates {(1, 0.49537923) }; 
\addplot [preaction={fill, BiANE-color}, pattern=bricks] coordinates {(1, 0.53803584) }; 
\addplot [preaction={fill, RW_Uniform-color}, pattern=north east lines, postaction={pattern=dots}] coordinates {(1, 0.36945719) }; 
\addplot [preaction={fill, IDBR-color}, pattern=grid, postaction={pattern=dots}] coordinates {(1, 0.15834228) }; 
\addplot [preaction={fill, BiNE-color}, pattern=horizontal lines, postaction={pattern=vertical lines}] coordinates {(1, 0.16027038) }; 
\end{axis}
\end{tikzpicture}\hspace{0mm}
}%
\subfloat[{\em CiteSeer}]{
\begin{tikzpicture}[scale=1]
\begin{axis}[
        height=\columnwidth/2.3,
        width=\columnwidth/1.6,
    xtick=\empty,
    ybar=1.6pt,
    bar width=0.24cm,
    enlarge x limits=true,
    ylabel={\em F1-score},
    xticklabel=\empty,
    yticklabel style = {font=\scriptsize},
    ymin=0,
    ymax=0.45,
    every axis y label/.style={at={(current axis.north west)},right=7mm,above=0mm},
    legend style={at={(0.02,0.98)},anchor=north west,cells={anchor=west},font=\tiny}
    ]

\addplot [preaction={fill, AHPP-color}, pattern=north east lines] coordinates {(1, 0.40715178) }; 
\addplot [preaction={fill, BHPP-color}, pattern=dots] coordinates {(1, 0.33593756) }; 
\addplot [preaction={fill, HPP-color}, pattern=crosshatch dots] coordinates {(1, 0.33593756) }; 
\addplot [preaction={fill, PPR-color}, pattern=grid] coordinates {(1, 0.31330283) }; 
\addplot [preaction={fill, Simrank-color}, pattern=crosshatch] coordinates {(1, 0.33593756) }; 
\addplot [preaction={fill, Pearson-color}, pattern=north west lines] coordinates {(1, 0.22094560) }; 
\addplot [preaction={fill, Jaccard-color}, pattern=checkerboard] coordinates {(1, 0.24461118) }; 
\addplot [preaction={fill, AnchorGNN-color}, pattern=horizontal lines] coordinates {(1, 0.21871476) }; 
\addplot [preaction={fill, GEBEp-color}, pattern=vertical lines] coordinates {(1, 0.24327882) }; 
\addplot [preaction={fill, BiANE-color}, pattern=bricks] coordinates {(1, 0.22199952) }; 
\addplot [preaction={fill, RW_Uniform-color}, pattern=north east lines, postaction={pattern=dots}] coordinates {(1, 0.21132843) }; 
\addplot [preaction={fill, IDBR-color}, pattern=grid, postaction={pattern=dots}] coordinates {(1, 0.29906599) }; 
\addplot [preaction={fill, BiNE-color}, pattern=horizontal lines, postaction={pattern=vertical lines}] coordinates {(1, 0.21593501) }; 
\end{axis}
\end{tikzpicture}\hspace{0mm}
}%
\subfloat[{\em Blogcatalog}]{
\begin{tikzpicture}[scale=1]
\begin{axis}[
        height=\columnwidth/2.3,
        width=\columnwidth/1.6,
    xtick=\empty,
    ybar=1.6pt,
    bar width=0.24cm,
    enlarge x limits=true,
    ylabel={\em F1-score},
    xticklabel=\empty,
    yticklabel style = {font=\scriptsize},
    ymin=0,
    ymax=0.55,
    every axis y label/.style={at={(current axis.north west)},right=7mm,above=0mm},
    legend style={at={(0.02,0.98)},anchor=north west,cells={anchor=west},font=\tiny}
    ]

\addplot [preaction={fill, AHPP-color}, pattern=north east lines] coordinates {(1, 0.49547231) }; 
\addplot [preaction={fill, BHPP-color}, pattern=dots] coordinates {(1, 0.40355049) }; 
\addplot [preaction={fill, HPP-color}, pattern=crosshatch dots] coordinates {(1, 0.38239414) }; 
\addplot [preaction={fill, PPR-color}, pattern=grid] coordinates {(1, 0.25734528) }; 
\addplot [preaction={fill, Simrank-color}, pattern=crosshatch] coordinates {(1, 0.42374593) }; 
\addplot [preaction={fill, Pearson-color}, pattern=north west lines] coordinates {(1, 0.35711726) }; 
\addplot [preaction={fill, Jaccard-color}, pattern=checkerboard] coordinates {(1, 0.35399023) }; 
\addplot [preaction={fill, AnchorGNN-color}, pattern=horizontal lines] coordinates {(1, 0.17835505) }; 
\addplot [preaction={fill, GEBEp-color}, pattern=vertical lines] coordinates {(1, 0.39257329) }; 
\addplot [preaction={fill, BiANE-color}, pattern=bricks] coordinates {(1, 0.41548860) }; 
\addplot [preaction={fill, RW_Uniform-color}, pattern=north east lines, postaction={pattern=dots}] coordinates {(1, 0.33513029) }; 
\addplot [preaction={fill, IDBR-color}, pattern=grid, postaction={pattern=dots}] coordinates {(1, 0.17757329) }; 
\addplot [preaction={fill, BiNE-color}, pattern=horizontal lines, postaction={pattern=vertical lines}] coordinates {(1, 0.24185668) }; 
\end{axis}
\end{tikzpicture}\hspace{0mm}
}%
\subfloat[{\em Flickr}]{
\begin{tikzpicture}[scale=1]
\begin{axis}[
        height=\columnwidth/2.3,
        width=\columnwidth/1.6,
    xtick=\empty,
    ybar=1.6pt,
    bar width=0.24cm,
    enlarge x limits=true,
    ylabel={\em F1-score},
    xticklabel=\empty,
    yticklabel style = {font=\scriptsize},
    ymin=0,
    ymax=0.53,
    every axis y label/.style={at={(current axis.north west)},right=7mm,above=0mm},
    legend style={at={(0.02,0.98)},anchor=north west,cells={anchor=west},font=\tiny}
    ]

\addplot [preaction={fill, AHPP-color}, pattern=north east lines] coordinates {(1, 0.47489209) }; 
\addplot [preaction={fill, BHPP-color}, pattern=dots] coordinates {(1, 0.38299760) }; 
\addplot [preaction={fill, HPP-color}, pattern=crosshatch dots] coordinates {(1, 0.36558753) }; 
\addplot [preaction={fill, PPR-color}, pattern=grid] coordinates {(1, 0.20470024) }; 
\addplot [preaction={fill, Simrank-color}, pattern=crosshatch] coordinates {(1, 0.27697842) }; 
\addplot [preaction={fill, Pearson-color}, pattern=north west lines] coordinates {(1, 0.38503597) }; 
\addplot [preaction={fill, Jaccard-color}, pattern=checkerboard] coordinates {(1, 0.37270983) }; 
\addplot [preaction={fill, AnchorGNN-color}, pattern=horizontal lines] coordinates {(1, 0.11148681) }; 
\addplot [preaction={fill, GEBEp-color}, pattern=vertical lines] coordinates {(1, 0.40705036) }; 
\addplot [preaction={fill, BiANE-color}, pattern=bricks] coordinates {(1, 0.28532374) }; 
\addplot [preaction={fill, RW_Uniform-color}, pattern=north east lines, postaction={pattern=dots}] coordinates {(1, 0.29767386) }; 
\addplot [preaction={fill, IDBR-color}, pattern=grid, postaction={pattern=dots}] coordinates {(1, 0.20261391) }; 
\addplot [preaction={fill, BiNE-color}, pattern=horizontal lines, postaction={pattern=vertical lines}] coordinates {(1, 0.14860911) }; 
\end{axis}
\end{tikzpicture}
}%

\subfloat[{\em MovieLens}]{
\hspace{-7mm}
\begin{tikzpicture}[scale=1]
\begin{axis}[
        height=\columnwidth/2.3,
        width=\columnwidth/1.6,
    xtick=\empty,
    ybar=1.6pt,
    bar width=0.24cm,
    enlarge x limits=true,
    ylabel={\em F1-score},
    xticklabel=\empty,
    yticklabel style = {font=\scriptsize},
    ymin=0,
    ymax=0.35,
    every axis y label/.style={at={(current axis.north west)},right=7mm,above=0mm},
    legend style={at={(0.02,0.98)},anchor=north west,cells={anchor=west},font=\tiny}
    ]

\addplot [preaction={fill, AHPP-color}, pattern=north east lines] coordinates {(1, 0.30200031) }; 
\addplot [preaction={fill, BHPP-color}, pattern=dots] coordinates {(1, 0.12660079) }; 
\addplot [preaction={fill, HPP-color}, pattern=crosshatch dots] coordinates {(1, 0.12471673) }; 
\addplot [preaction={fill, PPR-color}, pattern=grid] coordinates {(1, 0.06100132) }; 
\addplot [preaction={fill, Simrank-color}, pattern=crosshatch] coordinates {(1, 0.16125552) }; 
\addplot [preaction={fill, Pearson-color}, pattern=north west lines] coordinates {(1, 0.15179648) }; 
\addplot [preaction={fill, Jaccard-color}, pattern=checkerboard] coordinates {(1, 0.17675347) }; 
\addplot [preaction={fill, AnchorGNN-color}, pattern=horizontal lines] coordinates {(1, 0.10775944) }; 
\addplot [preaction={fill, GEBEp-color}, pattern=vertical lines] coordinates {(1, 0.18674029) }; 
\addplot [preaction={fill, BiANE-color}, pattern=bricks] coordinates {(1, 0.10674494) }; 
\addplot [preaction={fill, RW_Uniform-color}, pattern=north east lines, postaction={pattern=dots}] coordinates {(1, 0.09193676) }; 
\addplot [preaction={fill, IDBR-color}, pattern=grid, postaction={pattern=dots}] coordinates {(1, 0.12036426) }; 
\addplot [preaction={fill, BiNE-color}, pattern=horizontal lines, postaction={pattern=vertical lines}] coordinates {(1, 0.11894598) }; 
\end{axis}
\end{tikzpicture}\hspace{0mm}
}%
\subfloat[{\em LastFM}]{
\begin{tikzpicture}[scale=1]
\begin{axis}[
       height=\columnwidth/2.3,
        width=\columnwidth/1.6,
    xtick=\empty,
    ybar=1.6pt,
    bar width=0.24cm,
    enlarge x limits=true,
    ylabel={\em F1-score},
    xticklabel=\empty,
    yticklabel style = {font=\scriptsize},
    ymin=0,
    ymax=0.6,
    every axis y label/.style={at={(current axis.north west)},right=7mm,above=0mm},
    legend style={at={(0.02,0.98)},anchor=north west,cells={anchor=west},font=\tiny}
    ]

\addplot [preaction={fill, AHPP-color}, pattern=north east lines] coordinates {(1, 0.56045502) }; 
\addplot [preaction={fill, BHPP-color}, pattern=dots] coordinates {(1, 0.09116287) }; 
\addplot [preaction={fill, HPP-color}, pattern=crosshatch dots] coordinates {(1, 0.06947329) }; 
\addplot [preaction={fill, PPR-color}, pattern=grid] coordinates {(1, 0.03368601) }; 
\addplot [preaction={fill, Simrank-color}, pattern=crosshatch] coordinates {(1, 0.09975695) }; 
\addplot [preaction={fill, Pearson-color}, pattern=north west lines] coordinates {(1, 0.01880237) }; 
\addplot [preaction={fill, Jaccard-color}, pattern=checkerboard] coordinates {(1, 0.11605843) }; 
\addplot [preaction={fill, AnchorGNN-color}, pattern=horizontal lines] coordinates {(1, 0.05047801) }; 
\addplot [preaction={fill, GEBEp-color}, pattern=vertical lines] coordinates {(1, 0.11139388) }; 
\addplot [preaction={fill, BiANE-color}, pattern=bricks] coordinates {(1, 0.03027028) }; 
\addplot [preaction={fill, RW_Uniform-color}, pattern=north east lines, postaction={pattern=dots}] coordinates {(1, 0.43567728) }; 
\addplot [preaction={fill, IDBR-color}, pattern=grid, postaction={pattern=dots}] coordinates {(1, 0.2837134) }; 
\addplot [preaction={fill, BiNE-color}, pattern=horizontal lines, postaction={pattern=vertical lines}] coordinates {(1, 0.2641941) }; 
\end{axis}
\end{tikzpicture}\hspace{0mm}
}%
\subfloat[{\em Google}]{
\begin{tikzpicture}[scale=1]
\begin{axis}[
        height=\columnwidth/2.3,
        width=\columnwidth/1.6,
    xtick=\empty,
    ybar=1.6pt,
    bar width=0.24cm,
    enlarge x limits=true,
    ylabel={\em F1-score},
    xticklabel=\empty,
    yticklabel style = {font=\scriptsize},
    ymin=0,
    ymax=0.6,
    every axis y label/.style={at={(current axis.north west)},right=7mm,above=0mm},
    legend style={at={(0.02,0.98)},anchor=north west,cells={anchor=west},font=\tiny}
    ]

\addplot [preaction={fill, AHPP-color}, pattern=north east lines] coordinates {(1, 0.55644553) }; 
\addplot [preaction={fill, BHPP-color}, pattern=dots] coordinates {(1, 0.47349222) }; 
\addplot [preaction={fill, HPP-color}, pattern=crosshatch dots] coordinates {(1, 0.47522403) }; 
\addplot [preaction={fill, PPR-color}, pattern=grid] coordinates {(1, 0.37638002) }; 
\addplot [preaction={fill, Simrank-color}, pattern=crosshatch] coordinates {(1, 0.45180094) }; 
\addplot [preaction={fill, Pearson-color}, pattern=north west lines] coordinates {(1, 0.42414079) }; 
\addplot [preaction={fill, Jaccard-color}, pattern=checkerboard] coordinates {(1, 0.42413717) }; 
\addplot [preaction={fill, AnchorGNN-color}, pattern=horizontal lines] coordinates {(1, 0.42405103) }; 
\addplot [preaction={fill, GEBEp-color}, pattern=vertical lines] coordinates {(1, 0.42450633) }; 
\addplot [preaction={fill, BiANE-color}, pattern=bricks] coordinates {(1, 0.42415165) }; 
\addplot [preaction={fill, RW_Uniform-color}, pattern=north east lines, postaction={pattern=dots}] coordinates {(1, 0.51433695) }; 
\addplot [preaction={fill, IDBR-color}, pattern=grid, postaction={pattern=dots}] coordinates {(1, 0) }; 
\addplot [preaction={fill, BiNE-color}, pattern=horizontal lines, postaction={pattern=vertical lines}] coordinates {(1, 0) }; 
\end{axis}
\end{tikzpicture}\hspace{0mm}
}%
\subfloat[{\em Amazon}]{
\begin{tikzpicture}[scale=1]
\begin{axis}[
        height=\columnwidth/2.3,
        width=\columnwidth/1.6,
    xtick=\empty,
    ybar=1.6pt,
    bar width=0.24cm,
    enlarge x limits=true,
    ylabel={\em F1-score},
    xticklabel=\empty,
    yticklabel style = {font=\scriptsize},
    ymin=0,
    ymax=0.7,
    every axis y label/.style={at={(current axis.north west)},right=7mm,above=0mm},
    legend style={at={(0.02,0.98)},anchor=north west,cells={anchor=west},font=\tiny}
    ]

\addplot [preaction={fill, AHPP-color}, pattern=north east lines] coordinates {(1, 0.64311775) }; 
\addplot [preaction={fill, BHPP-color}, pattern=dots] coordinates {(1, 0.58170830) }; 
\addplot [preaction={fill, HPP-color}, pattern=crosshatch dots] coordinates {(1, 0.63455001) }; 
\addplot [preaction={fill, PPR-color}, pattern=grid] coordinates {(1, 0.33006091) }; 
\addplot [preaction={fill, Simrank-color}, pattern=crosshatch] coordinates {(1, 0.36338136) }; 
\addplot [preaction={fill, Pearson-color}, pattern=north west lines] coordinates {(1, 0.34277199) }; 
\addplot [preaction={fill, Jaccard-color}, pattern=checkerboard] coordinates {(1, 0.34477247) }; 
\addplot [preaction={fill, AnchorGNN-color}, pattern=horizontal lines] coordinates {(1, 0.33383074) }; 
\addplot [preaction={fill, GEBEp-color}, pattern=vertical lines] coordinates {(1, 0.31400276) }; 
\addplot [preaction={fill, BiANE-color}, pattern=bricks] coordinates {(1, 0) }; 
\addplot [preaction={fill, RW_Uniform-color}, pattern=north east lines, postaction={pattern=dots}] coordinates {(1, 0.49294368) }; 
\addplot [preaction={fill, IDBR-color}, pattern=grid, postaction={pattern=dots}] coordinates {(1, 0) }; 
\addplot [preaction={fill, BiNE-color}, pattern=horizontal lines, postaction={pattern=vertical lines}] coordinates {(1, 0) }; 
\end{axis}
\end{tikzpicture}
}%
\end{small}
\caption{\textcolor{blue}{Clustering consistency validation of different methods.}}
\label{fig:cluster_consistency_validation}
\vspace{-1mm}
\end{figure*}
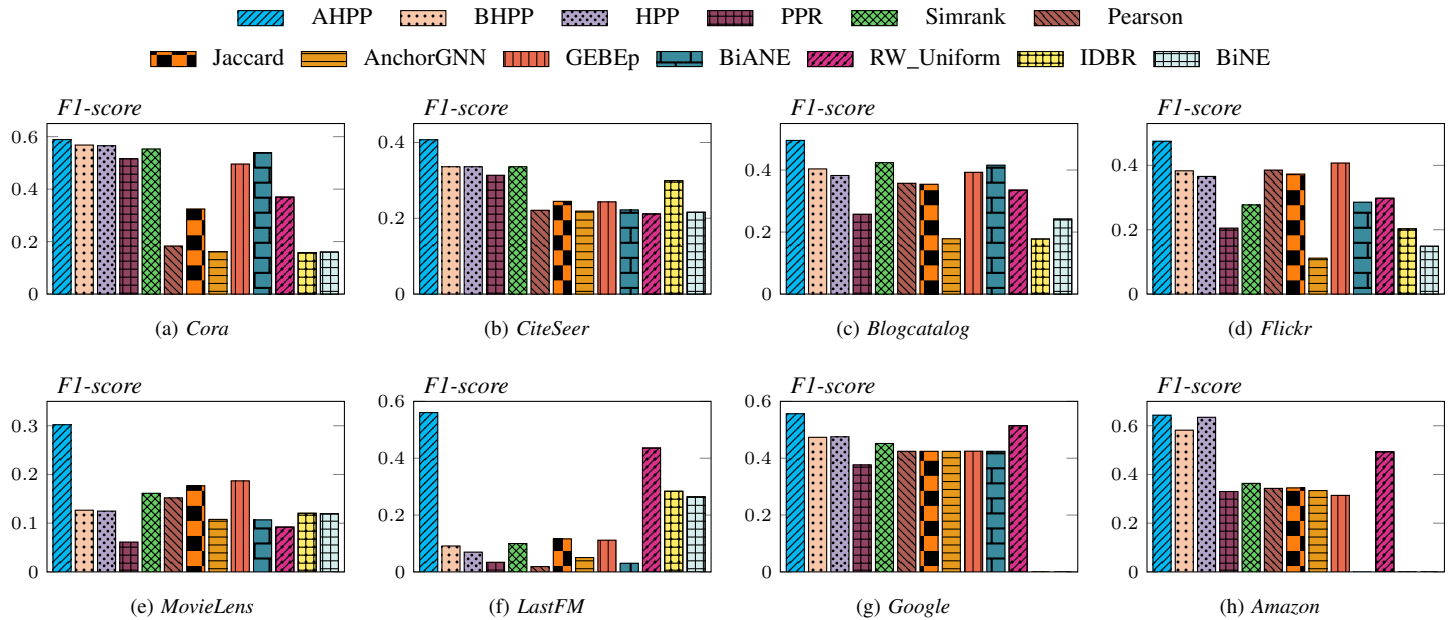

\begin{figure*}[t!]
\centering
\includegraphics[width=1\textwidth]{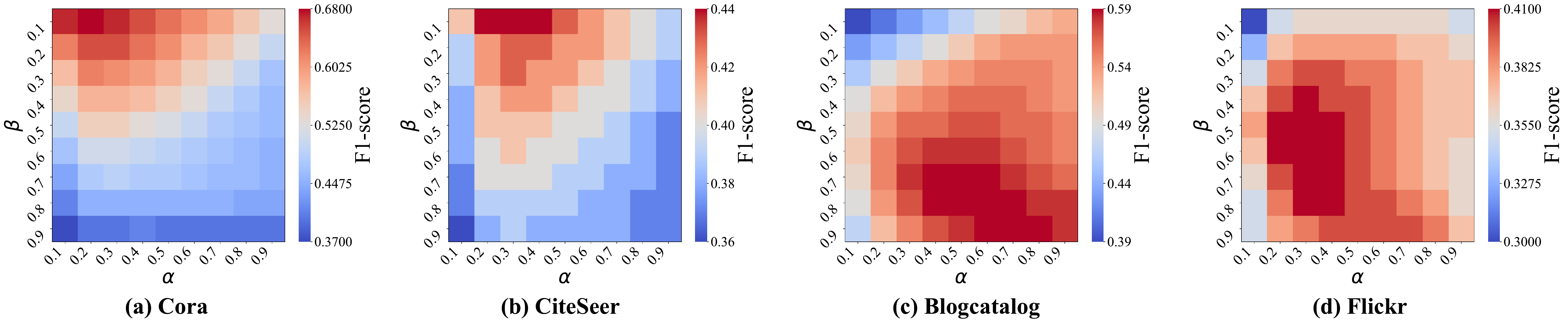}

\vspace{2mm}

\includegraphics[width=1\textwidth]{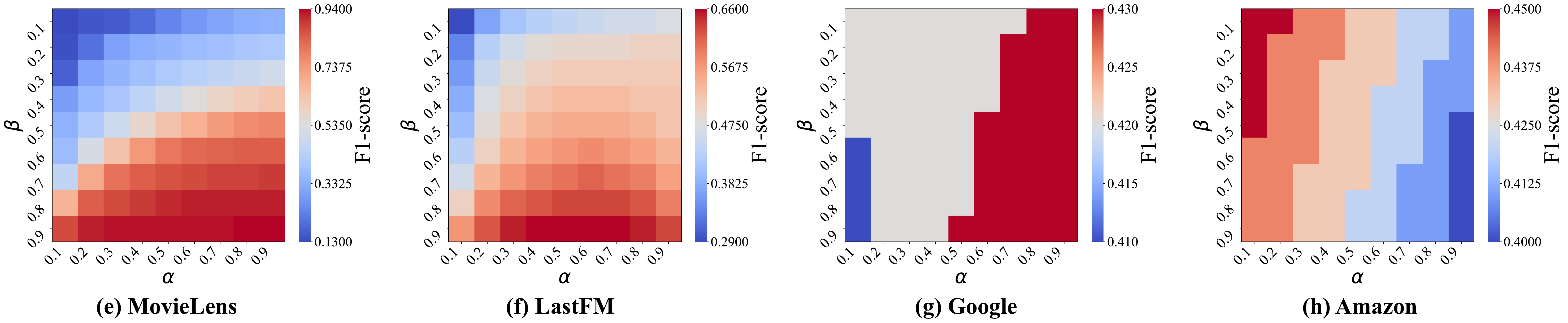}
\caption{\textcolor{blue}{Clustering consistency validation of our AHPP with varying  $\alpha$ and $\beta$.}}
\label{fig:alpha_beta_cluster}
\vspace{-2mm}
\end{figure*}

\stitle{Exp-4: Clustering consistency validation of different methods.} \textcolor{blue}{Given  a query node \( u\), cluster consistency validation aims to identify the nodes in the ground truth cluster associated with \( u \) based on their similarities. Namely, we first derive the ground truth clusters of nodes relevant to the query node for each dataset. Then, cluster consistency is evaluated using the widely recognized metric F1-score (\textit{F1-score@$k$}) \cite{liu2024bird}. The F1-score is the harmonic mean of precision and recall, providing a balanced measure of both. Precision is defined as the fraction of nodes in the top \( k \) positions that belong to the ground truth set, while recall is the ratio of the number of ground truth nodes in the top \( k \) positions to the total size of the ground truth set. For fairness, we set \( k \)  to the ground truth cluster size of the query node. Figure \ref{fig:cluster_consistency_validation} presents the F1-score results of AHPP and twelve other similarity measures across six datasets. The following observations can be made:  (1) The proposed AHPP consistently outperforms all other methods on all datasets. For example, on MovieLens and LastFM, AHPP achieves significant improvements of at least 11\% and 12\%, respectively, compared to state-of-the-art methods. (2) The bipartite network embedding-based methods (i.e., AnchorGNN, GEBEp, BiNE, IDBR and BiANE) perform poorly on most datasets. Additionally, BiNE, IDBR and BiANE even runs out of memory when applied to the Amazon dataset, which is the largest dataset among ours. In contrast, our AHPP exhibits excellent performance on every dataset and is capable of processing large-scale datasets. This effectively demonstrates the efficacy of our AHPP model in integrating attribute and structural information, the effectiveness of our algorithmic designs, and the scalability of our algorithms compared to bipartite network embedding methods. (3) BHPP and HPP demonstrate good performance on most datasets, with the exception of MovieLens and LastFM, which indicates their sensitivity to different datasets. The underlying reason is that in these two datasets, the ground-truth clusters are more closely related to node attributes than to structural information. (4) The unipartite graph similarity models (i.e., Pearson, Jaccard, SimRank, PPR, and RW\_Uniform) exhibit subpar performance in most cases. This is due to their inability to consider the unique structural properties of bipartite graphs. Moreover, the majority of them (i.e., Pearson, Jaccard, SimRank, and PPR) also neglect the attribute information of nodes. In summary, these findings offer compelling evidence that our AHPP consistently generates higher-quality clusters than the baselines.}

\stitle{Exp-5: Clustering consistency validation of our AHPP with varying $\alpha$ and $\beta$.} \textcolor{blue}{This experiment investigates the impact of the restart probability $\alpha$ and the attribute jumping probability $\beta$ of AHPP on the cluster consistency validation task, with a fixed value of $\epsilon=10^{-6}$. Figure \ref{fig:alpha_beta_cluster} illustrates the results, where the $x$-axis and $y$-axis represent the values of $\alpha$ and $\beta$, respectively. The color intensity in each cell of the heatmap corresponds to the F1-score for the respective parameter combination. Higher intensity indicates better F1-score values, allowing easy identification of parameter pairs that yield optimal or suboptimal performance. We make the following observations: (1) On most datasets, the best F1-score is achieved when $\alpha$ ranges from 0.1 to 0.3, except on Google, where it peaks with $\alpha$ values between 0.5 and 0.9. Since $\alpha$ represents the restart probability in AHPP, lower values encourage random walkers to explore neighboring nodes rather than remain at the current node. These results suggest that high-order information between nodes is particularly critical for the cluster consistency validation task on attributed bipartite graphs. (2) On MovieLens and LastFM, an increase in $\beta$ corresponds to higher F1-scores, suggesting that attribute information plays a crucial role in these datasets. This is consistent with the fact that both MovieLens and LastFM are user-item graphs enriched with informative node attributes. On other datasets, the best F1-score is achieved when $\beta$ ranges from 0.1 to 0.5, reflecting the necessity of balancing structural and attribute information. (3) The optimal values of $\alpha$ and $\beta$ vary across datasets, reflecting their distinct characteristics. This underscores the importance of striking a balance between high-order and local information, as well as between structural and attribute data, to maximize F1-scores.}

\stitle{Exp-6: Top-$k$ precision of our AHPP with varying $k$.} \textcolor{blue}{Following previous work on evaluating empirical precision \cite{wang2020personalized, wang2017fora}, we also use the well-known top-$k$ precision to further  evaluate the query quality of AHPP. It should be noted that other models (e.g., BHPP and HPP) are disregarded as their computational outputs are the BHPP values and HPP values, respectively, which deviate from ours (i.e., AHPP values). Specifically, given the ground truth top-k node set $U_k$ and the approximate set $\hat{U}_k$, the precision is defined as $\left| U_k \cap \hat{U}_k \right| / \left| U_k \right|$, representing the proportion of nodes in $\hat{U}_k$ that match those in the ground truth set $U_k$. For obtaining the ground truth set $U_k$, we run the PI method with a sufficiently large truncation step $T=150$ on each query and vary $k$ from 200 to 600. For all methods, we fix $\epsilon$ to $10^{-6}$. Figure \ref{fig:top_k_precision} presents the top-$k$ precision results for all methods across six datasets. We make the following observations: (1) Generally, ASRP, APP, FP, and PI achieve high-quality results consistently across all graphs and various $k$ settings. This demonstrates that our proposed APP and ASRP algorithms provide accurate estimations of the AHPP values. (2) ASRP produces the best precision results in almost all cases. Besides, as shown in Section \ref{subsec:efficiency}, ASRP outperforms all competitors in terms of running efficiency. This demonstrates that our ASRP algorithm strikes a good balance between accuracy and efficiency. In summary, both our APP and ASRP algorithms achieve strong results in approximate AHPP computation, confirming their correctness and query efficacy.}

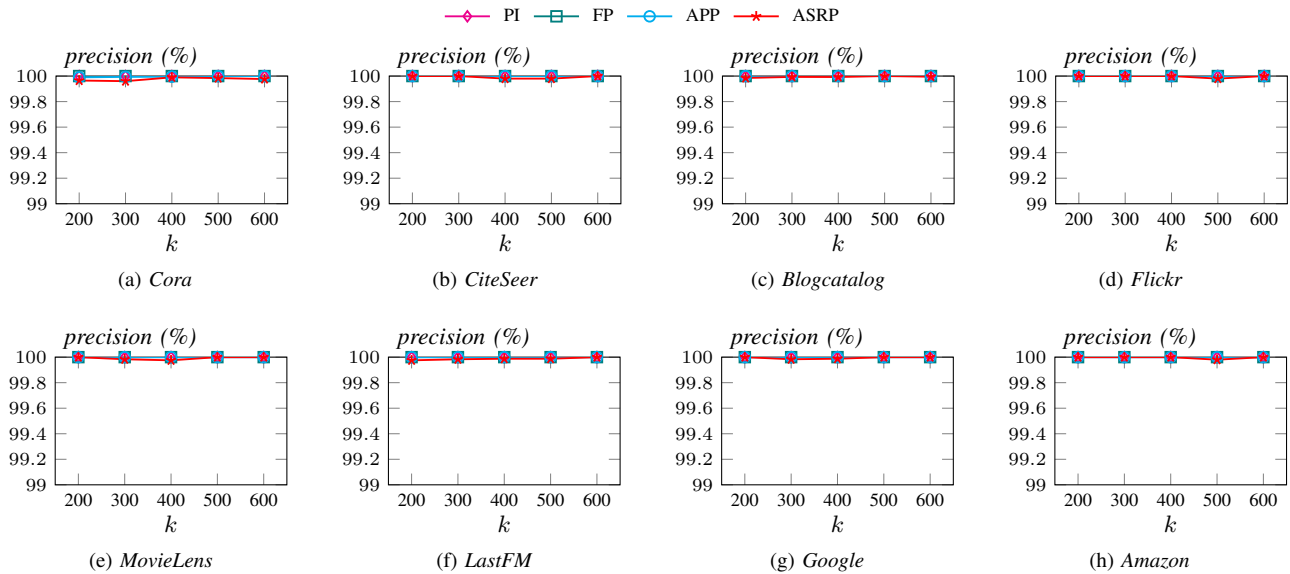
\begin{figure*}[t!]
\centering
\begin{small}
\begin{tikzpicture}
    \begin{customlegend}[legend columns=4,
        legend entries={PI, FP, APP, ASRP},
        legend style={at={(0.45,1.05)},anchor=north,draw=none,font=\scriptsize,column sep=0.1cm}]
    \addlegendimage{line width=0.25mm,mark=diamond,color=PI-color}
    \addlegendimage{line width=0.25mm,mark=square,color=FP-color}
    \addlegendimage{line width=0.25mm,mark=o,color=first-color}
    \addlegendimage{line width=0.25mm,mark=star,color=second-color}
    \end{customlegend}
\end{tikzpicture}
\\[-\lineskip]
\vspace{-3mm}
\subfloat[{\em Cora}]{
\hspace{-2mm}
\begin{tikzpicture}[scale=1]
    \begin{axis}[
        height=\columnwidth/2.7,
        width=\columnwidth/1.9,
        ylabel={\em precision (\%)},
        xlabel=$k$,
        xmin=0.5, xmax=5.5,
        ymin=99, ymax=100,
        xtick={1,2,3,4,5},
        xticklabel style = {font=\scriptsize},
        yticklabel style = {font=\scriptsize},
        xticklabels={200, 300, 400, 500, 600},
        scaled y ticks = false,
        every axis y label/.style={at={(current axis.north west)},right=10mm,above=0mm},
        every axis x label/.style={at={(current axis.south)}, below=3mm, anchor=north},
    ]
    \addplot[line width=0.25mm,mark=diamond,color=PI-color]  
        plot coordinates {
(1,	100)
(2,	100)
(3, 100)
(4,	100)
(5,	100)
    };
    
    \addplot[line width=0.25mm,mark=square,color=FP-color]  
        plot coordinates {
(1,	100)
(2,	100)
(3,	100)
(4,	100)
(5,	100)
    };

    \addplot[line width=0.25mm,mark=o,color=first-color]  
        plot coordinates {
(1,	99.990000)
(2,	99.993333)
(3,	99.995000)
(4,	100)
(5,	100)
    };

    \addplot[line width=0.25mm,mark=star,color=second-color]  
        plot coordinates {
(1,	99.965000)
(2,	99.960000)
(3,	99.990000)
(4,	99.984000)
(5,	99.976667)
    };
    \end{axis}
\end{tikzpicture}\hspace{4mm}%
}%
\subfloat[{\em CiteSeer}]{
\begin{tikzpicture}[scale=1]
    \begin{axis}[
        height=\columnwidth/2.7,
        width=\columnwidth/1.9,
        ylabel={\em precision (\%)},
        xlabel=$k$,
        xmin=0.5, xmax=5.5,
        ymin=99, ymax=100,
        xtick={1,2,3,4,5},
        xticklabel style = {font=\scriptsize},
        yticklabel style = {font=\scriptsize},
        xticklabels={200, 300, 400, 500, 600},
        scaled y ticks = false,
        every axis y label/.style={at={(current axis.north west)},right=10mm,above=0mm},
        every axis x label/.style={at={(current axis.south)}, below=3mm, anchor=north},
    ]
    \addplot[line width=0.25mm,mark=diamond,color=PI-color]  
        plot coordinates {
(1,	100)
(2,	100)
(3, 100)
(4,	100)
(5,	100)
    };
    
    \addplot[line width=0.25mm,mark=square,color=FP-color]  
        plot coordinates {
(1,	100)
(2,	100)
(3,	100)
(4,	100)
(5,	100)
    };

    \addplot[line width=0.25mm,mark=o,color=first-color]  
        plot coordinates {
(1,	100)
(2,	100)
(3,	100)
(4,	100)
(5,	100)
    };

    \addplot[line width=0.25mm,mark=star,color=second-color]  
        plot coordinates {
(1,	100)
(2,	100)
(3,	99.98)
(4,	99.98)
(5,	100)
    };
    \end{axis}
\end{tikzpicture}\hspace{4mm}%
}%
\subfloat[{\em Blogcatalog}]{
\begin{tikzpicture}[scale=1]
    \begin{axis}[
        height=\columnwidth/2.7,
        width=\columnwidth/1.9,
        ylabel={\em precision (\%)},
        xlabel=$k$,
        xmin=0.5, xmax=5.5,
        ymin=99, ymax=100,
        xtick={1,2,3,4,5},
        xticklabel style = {font=\scriptsize},
        yticklabel style = {font=\scriptsize},
        xticklabels={200, 300, 400, 500, 600},
        scaled y ticks = false,
        every axis y label/.style={at={(current axis.north west)},right=10mm,above=0mm},
        every axis x label/.style={at={(current axis.south)}, below=3mm, anchor=north},
    ]
    \addplot[line width=0.25mm,mark=diamond,color=PI-color]  
        plot coordinates {
(1,	100)
(2,	100)
(3, 100)
(4,	100)
(5,	100)
    };
    
    \addplot[line width=0.25mm,mark=square,color=FP-color]  
        plot coordinates {
(1,	100)
(2,	100)
(3,	100)
(4,	100)
(5,	100)
    };

    \addplot[line width=0.25mm,mark=o,color=first-color]  
        plot coordinates {
(1,	100)
(2,	100)
(3,	100)
(4,	100)
(5,	100)
    };

    \addplot[line width=0.25mm,mark=star,color=second-color]  
        plot coordinates {
(1,	99.985000)
(2,	99.993333)
(3,	99.992500)
(4,	100.000000)
(5,	99.995000)
    };
    \end{axis}
\end{tikzpicture}\hspace{4mm}%
}%
\subfloat[{\em Flickr}]{
\begin{tikzpicture}[scale=1]
    \begin{axis}[
        height=\columnwidth/2.7,
        width=\columnwidth/1.9,
        ylabel={\em precision (\%)},
        xlabel=$k$,
        xmin=0.5, xmax=5.5,
        ymin=99, ymax=100,
        xtick={1,2,3,4,5},
        xticklabel style = {font=\scriptsize},
        yticklabel style = {font=\scriptsize},
        xticklabels={200, 300, 400, 500, 600},
        scaled y ticks = false,
        every axis y label/.style={at={(current axis.north west)},right=10mm,above=0mm},
        every axis x label/.style={at={(current axis.south)}, below=3mm, anchor=north},
    ]
    \addplot[line width=0.25mm,mark=diamond,color=PI-color]  
        plot coordinates {
(1,	100)
(2,	100)
(3,	100)
(4,	100)
(5,	100)
    };
    
    \addplot[line width=0.25mm,mark=square,color=FP-color]  
        plot coordinates {
(1,	100)
(2,	100)
(3,	100)
(4,	100)
(5,	100)
    };

    \addplot[line width=0.25mm,mark=o,color=first-color]  
        plot coordinates {
(1,	100)
(2,	100)
(3,	100)
(4,	100)
(5,	100)
    };

    \addplot[line width=0.25mm,mark=star,color=second-color]  
        plot coordinates {
(1,	100)
(2,	100)
(3,	100)
(4,	99.98)
(5,	100)
    };
    \end{axis}
\end{tikzpicture}
}%

\subfloat[{\em MovieLens}]{
\begin{tikzpicture}[scale=1]
    \begin{axis}[
        height=\columnwidth/2.7,
        width=\columnwidth/1.9,
        ylabel={\em precision (\%)},
        xlabel=$k$,
        xmin=0.5, xmax=5.5,
        ymin=99, ymax=100,
        xtick={1,2,3,4,5},
        xticklabel style = {font=\scriptsize},
        yticklabel style = {font=\scriptsize},
        xticklabels={200, 300, 400, 500, 600},
        scaled y ticks = false,
        every axis y label/.style={at={(current axis.north west)},right=10mm,above=0mm},
        every axis x label/.style={at={(current axis.south)}, below=3mm, anchor=north},
    ]
    \addplot[line width=0.25mm,mark=diamond,color=PI-color]  
        plot coordinates {
(1,	100)
(2,	100)
(3,	100)
(4,	100)
(5,	100)
    };
    
    \addplot[line width=0.25mm,mark=square,color=FP-color]  
        plot coordinates {
(1,	100)
(2,	100)
(3,	100)
(4,	100)
(5,	100)
    };

    \addplot[line width=0.25mm,mark=o,color=first-color]  
        plot coordinates {
(1,	100)
(2,	100)
(3,	100)
(4,	100)
(5,	100)
    };

    \addplot[line width=0.25mm,mark=star,color=second-color]  
        plot coordinates {
(1,	100)
(2,	99.983333)
(3,	99.975)
(4, 100)
(5,	100)
    };
    \end{axis}
\end{tikzpicture}\hspace{4mm}%
}%
\subfloat[{\em LastFM}]{
\begin{tikzpicture}[scale=1]
    \begin{axis}[
        height=\columnwidth/2.7,
        width=\columnwidth/1.9,
        ylabel={\em precision (\%)},
        xlabel=$k$,
        xmin=0.5, xmax=5.5,
        ymin=99, ymax=100,
        xtick={1,2,3,4,5},
        xticklabel style = {font=\scriptsize},
        yticklabel style = {font=\scriptsize},
        xticklabels={200, 300, 400, 500, 600},
        scaled y ticks = false,
        every axis y label/.style={at={(current axis.north west)},right=10mm,above=0mm},
        every axis x label/.style={at={(current axis.south)}, below=3mm, anchor=north},
    ]
    \addplot[line width=0.25mm,mark=diamond,color=PI-color]  
        plot coordinates {
(1,	100)
(2,	100)
(3,	100)
(4,	100)
(5,	100)
    };
    
    \addplot[line width=0.25mm,mark=square,color=FP-color]  
        plot coordinates {
(1,	100)
(2,	100)
(3,	100)
(4,	100)
(5,	100)
    };

    \addplot[line width=0.25mm,mark=o,color=first-color]  
        plot coordinates {
(1,	100)
(2,	100)
(3,	100)
(4,	100)
(5,	100)
    };

    \addplot[line width=0.25mm,mark=star,color=second-color]  
        plot coordinates {
(1,	99.975)
(2,	99.983333)
(3,	99.9875)
(4,	99.9875)
(5,	100)
    };
    \end{axis}
\end{tikzpicture}\hspace{4mm}%
}%
\subfloat[{\em Google}]{
\begin{tikzpicture}[scale=1]
    \begin{axis}[
        height=\columnwidth/2.7,
        width=\columnwidth/1.9,
        ylabel={\em precision (\%)},
        xlabel=$k$,
        xmin=0.5, xmax=5.5,
        ymin=99, ymax=100,
        xtick={1,2,3,4,5},
        xticklabel style = {font=\scriptsize},
        yticklabel style = {font=\scriptsize},
        xticklabels={200, 300, 400, 500, 600},
        scaled y ticks = false,
        every axis y label/.style={at={(current axis.north west)},right=10mm,above=0mm},
        every axis x label/.style={at={(current axis.south)}, below=3mm, anchor=north},
    ]
    \addplot[line width=0.25mm,mark=diamond,color=PI-color]  
        plot coordinates {
(1,	100)
(2,	100)
(3,	100)
(4,	100)
(5,	100)
    };
    
    \addplot[line width=0.25mm,mark=square,color=FP-color]  
        plot coordinates {
(1,	100)
(2,	100)
(3,	100)
(4,	100)
(5,	100)
    };

    \addplot[line width=0.25mm,mark=o,color=first-color]  
        plot coordinates {
(1,	100)
(2,	100)
(3,	100)
(4,	100)
(5,	100)
    };

    \addplot[line width=0.25mm,mark=star,color=second-color]  
        plot coordinates {
(1,	100)
(2,	99.98333)
(3,	99.9875)
(4,	100)
(5,	100)
    };
    \end{axis}
\end{tikzpicture}\hspace{4mm}%
}%
\subfloat[{\em Amazon}]{
\begin{tikzpicture}[scale=1]
    \begin{axis}[
        height=\columnwidth/2.7,
        width=\columnwidth/1.9,
        ylabel={\em precision (\%)},
        xlabel=$k$,
        xmin=0.5, xmax=5.5,
        ymin=99, ymax=100,
        xtick={1,2,3,4,5},
        xticklabel style = {font=\scriptsize},
        yticklabel style = {font=\scriptsize},
        xticklabels={200, 300, 400, 500, 600},
        scaled y ticks = false,
        every axis y label/.style={at={(current axis.north west)},right=10mm,above=0mm},
        every axis x label/.style={at={(current axis.south)}, below=3mm, anchor=north},
    ]
    \addplot[line width=0.25mm,mark=diamond,color=PI-color]  
        plot coordinates {
(1,	100)
(2,	100)
(3,	100)
(4,	100)
(5,	100)
    };
    
    \addplot[line width=0.25mm,mark=square,color=FP-color]  
        plot coordinates {
(1,	100)
(2,	100)
(3,	100)
(4,	100)
(5,	100)
    };

    \addplot[line width=0.25mm,mark=o,color=first-color]  
        plot coordinates {
(1,	100)
(2,	100)
(3,	100)
(4,	100)
(5,	100)
    };

    \addplot[line width=0.25mm,mark=star,color=second-color]  
        plot coordinates {
(1,	100)
(2,	100)
(3,	100)
(4,	99.98)
(5,	100)
    };
    \end{axis}
\end{tikzpicture}
}
\vspace{0mm}
\end{small}
\caption{\textcolor{blue}{Top-$k$ precision of our AHPP with varying $k$ (MC is excluded due to timeouts on all datasets).}}
\label{fig:top_k_precision}
\vspace{-3mm}
\end{figure*}

\input{figures/experiment/Effectiveness/link_prediction}

\stitle{Exp-7: Link prediction of different methods.} \textcolor{blue}{Following existing studies \cite{huang2020biane, DBLP:conf/sigir/LiWLLJD24}, for each dataset, we randomly remove 20\% of the total edges. The missing links are predicted as follows: given an endpoint \( u \in U \) of a removed edge, we retrieve the top-\( k \) similar nodes in \( U \) of \( u \), and add edges between \( u \) and neighbors of its similar nodes to a set of candidates. Precision is used to measure the quality, defined as \( \frac{| \text{candidates} \cap \text{deleted edges} |}{| \text{deleted edges} |} \). We also vary \( k \) from 10 to 100 in steps of 10. The results are shown in Figure \ref{fig:link_prediction} (note that  Google and Amazon are excluded because of timeouts). We make the following observations: (1) The proposed AHPP consistently outperforms all other competitors across all datasets under different $k$. (2) The bipartite network embedding-based methods (i.e., AnchorGNN, GEBEp, BiNE, IDBR, and BiANE) generally exhibit inferior performance across most datasets. In particular, AnchorGNN and BiANE perform markedly worse than our approach on all evaluated datasets. These results clearly demonstrate the effectiveness and robustness of the proposed AHPP model for link prediction when compared with bipartite network embedding methods. (3) BHPP and HPP achieve strong performance on most datasets, except on Cora and CiteSeer. This observation suggests that these methods are sensitive to dataset characteristics; in particular, for citation networks such as Cora and CiteSeer, link formation is more strongly correlated with node attributes than with pure structural proximity. (4) The unipartite graph similarity models (i.e., Pearson, Jaccard, SimRank, PPR, and RW\_Uniform) show poor performance across most datasets. This stems from their inability to account for the unique structural properties of bipartite graphs. Moreover, the majority of them (i.e., Pearson, Jaccard, SimRank, and PPR) also overlook node attribute information. (5) Precision increases with \( k \), as a larger \( k \) corresponds to a more extensive candidate set for link prediction. To summarize, the empirical results further provide substantial validation of our AHPP's effectiveness in link prediction tasks.}

\vspace{-0.3cm}
\section{Related Work}\label{sec:relate}

\stitle{Similarity search on unipartite graphs} has been widely studied in recent years \cite{liao2022efficient, liao2023efficient, wang2021exactsim}. \textcolor{blue}{Notable examples include the structure-based models (e.g., Jaccard's coefficient \cite{li2024improved} and Pearson's correlation coefficient \cite{huang2024zhenmei}) and random walk-based models (e.g., SimRank \cite{wang2021exactsim} and Personalized PageRank (PPR) \cite{liao2023efficient}). This work is also closely related to PPR computation, as the proposed AHPP is the PPR on the graph constructed from an attributed bipartite graph. In the literature, many studies have focused on single-source PPR queries \cite{liu2016powerwalk, yoon2018tpa, wang2017fora, liao2022efficient, liao2023efficient,DBLP:journals/pacmmod/ZhouLLLW25}.} Among these works, \cite{yoon2018tpa} necessitated performing computationally expensive matrix-vector operations, \cite{liu2016powerwalk} employed numerous random walks to approximate PPR values, and several recent studies \cite{wang2017fora, liao2022efficient, liao2023efficient} aimed to enhance the efficiency of PPR computations by integrating local-push techniques with random walks. In addition to single-source PPR queries, numerous studies also focused on other PPR queries, including single-target PPR queries \cite{wang2020personalized, liao2022efficient}, single-pair PPR queries \cite{lofgren2016personalized, wang2016hubppr}, and top-k PPR queries \cite{wang2017fora, wei2018topppr}. Some recent studies have also been concerned with computing the PPR values on dynamic graphs \cite{hou2023personalized, li2023everything} and in parallel or distributed settings \cite{lin2019distributed, hou2021massively}, which are orthogonal to ours. \textcolor{blue}{However, these approaches focus solely on general unipartite graphs, without considering the complex yet highly useful structures of bipartite graphs.}

\stitle{Similarity search on bipartite graphs} has received significant attention recently. In the literature, \cite{jeh2002simrank} proposed the well-known SimRank method for similarity search in bipartite graphs (note that in the original paper, SimRank could also solve the unipartite graph case). Although SimRank was highly effective, its computational costs posed challenges when scaling to massive graphs. \cite{antonellis2008simrank++} proposed an improved version of SimRank, called SimRank++, and demonstrated its effectiveness in query rewriting. \textcolor{blue}{A recent advancement, P-Simrank \cite{dey2020p}, built upon the idea of SimRank to address scale-free bipartite networks. Nonetheless, these SimRank-like measures were hindered by high computational costs, which negatively impacted result quality. Earlier studies \cite{epasto2014reduce} introduced a promising proximity measure, HPP, which generalizes the idea of PPR \cite{andersen2006local} to bipartite networks within a unified iterative framework. Additionally, \cite{yang2022efficient} proposed Bidirectional Hidden Personalized PageRank (BHPP), a strengthened variant of HPP, and demonstrated superior query quality. \cite{liu2024bird} further designed a novel algorithm to improve query performance for single-source BHPP queries. However, they overlook the informative node attributes that frequently appear in applications.}


\section{CONCLUSION}\label{sec:conclude}
In this paper, we propose a novel random walk model of Attribute-augmented Hidden Personalized PageRank (AHPP), simultaneously capturing higher-order structural proximity and attribute similarity. We then develop two highly efficient push-style algorithms, APP and ASRP, accompanied by their theoretical analysis, to effectively solve the \( \epsilon \)-approximate single-source AHPP query problem. The APP algorithm is motivated by two key insights derived from the existing Forward Push technique, while the ASRP algorithm builds upon the APP algorithm by incorporating a synchronous push strategy and a more refined termination threshold. Finally, we conduct extensive experiments on real-world and synthetic datasets, comparing our approach with thirteen baselines to test the efficiency and effectiveness of our proposed solutions.

\bibliographystyle{IEEEtran}
\bibliography{main}

\end{document}